\newtheorem{Lemma}{Lemma}
\newtheorem{Corollary}{Corollary}
\newtheorem{lemma}[Lemma]{$\mathbf{Lemma}$}
\newtheorem{corollary}[Corollary]{$\mathbf{Corollary}$}
\newcounter{problem}
\newcounter{save@equation}
\newcounter{save@problem}
\begin{document}
\title{\vspace{-0.5em} \huge{ Age of Information: Can CR-NOMA Help?   }}

\author{ Zhiguo Ding, \IEEEmembership{Fellow, IEEE},   Robert Schober, \IEEEmembership{Fellow, IEEE}, and H. Vincent Poor, \IEEEmembership{Life Fellow, IEEE}    \thanks{ 
  
\vspace{-2em}

    Z. Ding and H. V. Poor are  with the Department of
Electrical and Computer Engineering, Princeton University, Princeton, NJ 08544,
USA. Z. Ding
 is also  with the School of
Electrical and Electronic Engineering, the University of Manchester, Manchester, UK (email: \href{mailto:zhiguo.ding@manchester.ac.uk}{zhiguo.ding@manchester.ac.uk}, \href{mailto:poor@princeton.edu}{poor@princeton.edu}).
  R. Schober is with the Institute for Digital Communications,
Friedrich-Alexander-University Erlangen-Nurnberg (FAU), Germany (email: \href{mailto:robert.schober@fau.de}{robert.schober@fau.de}).
 

  }\vspace{-2em}}
 \maketitle

\vspace{-1em}
\begin{abstract}
The aim of this paper is to  exploit    cognitive-ratio inspired NOMA (CR-NOMA) transmission  to reduce   the age of information in  wireless networks.  In particular, two CR-NOMA transmission protocols are developed by   utilizing the key features of different data generation models and applying CR-NOMA as an add-on to a legacy orthogonal multiple access (OMA) based network.   The fact that   the implementation of CR-NOMA causes little   disruption to the legacy OMA network means that  the proposed CR-NOMA protocols can be practically   implemented in   various communication systems which are based on OMA.  Closed-form expressions for  the AoI achieved by the proposed NOMA protocols are developed to facilitate performance evaluation, and asymptotic studies are carried out to identify two benefits of using   NOMA to reduce the AoI in wireless networks. One is that the use of NOMA provides   users more opportunities  to transmit, which means that the users can update their base station more frequently.  The other is that the use of NOMA can reduce access delay, i.e.,   the users are scheduled  to transmit earlier  than in the OMA case, which is useful to improve the freshness of the data available in the wireless network. 
\end{abstract}\vspace{-1em}


\section{Introduction}
In order to support the   services envisioned for  the sixth generation (6G) mobile network, such as  ultra massive machine type  communications (umMTC) and enhanced ultra-reliable low latency communications (euRLLC), it is critical to ensure the freshness of the data collected in the network \cite{8000687,6195689, 8930830,8123937,9380899 }. For example, as an important application of umMTC, smart cities require  the   data for  air quality control, traffic management, and critical infrastructure monitoring  to be timely and frequently    collected \cite{you6g,8766143}. We note that the conventional performance evaluation metrics, such as the ergodic data rate and bit error probability, are not adequate for measuring  the freshness of the data available in the network, which motivates a recently developed metric, termed the age of information (AoI). In particular, the AoI is defined as the time elapsed between the generation time and the received time of a successfully delivered  update, and the AoI achievable for   single-user transmission    has been rigorously characterized   in \cite{8000687,6195689}.  For energy constrained wireless networks, such as sensor networks, the use of energy harvesting is important, and the impact of energy harvesting on the AoI    has been studied in \cite{8606155}. For the scenario with correlated information  from multiple devices, a new metric, termed correlation-aware AoI, has been developed and optimized for unmanned aerial vehicle (UAV) networks in \cite{9606181}.  
Recently, the application of advanced physical layer communication techniques, such as hybrid automatic repeat request (H-ARQ) and cooperative communications, to improve the AoI of   wireless networks with one source-destination pair has been considered in \cite{9399662} and \cite{9615376}, respectively.



For the scenario with multiple users, the AoI analysis  is more challenging than that for the single-user scenario. This is due to the fact that multiple users are competing in the same transmission  medium, i.e., one user's update might be preempted by another's    and hence its AoI is affected by    the other users' transmission strategies. The AoI realized by  wireless transmission with various random access protocols has been characterized in \cite{8469047, 9695972,9388907}. We note that 
in many wireless networks, the potential collision between multiple users is  avoided by applying orthogonal multiple access (OMA) techniques, such as  frequency division multiple access (FDMA) and  time division multiple access (TDMA). In \cite{8995639}, the impact of these OMA techniques on the AoI has been studied, where   TDMA was shown to outperform  FDMA in terms of the averaged AoI. As  non-orthogonal multiple access (NOMA) is   more spectrally efficient than OMA, it is natural to consider  the use of NOMA for improving the AoI of wireless networks \cite{9693417}.  The authors of \cite{8845254} focused on   a two-user scenario and showed that  the spectral efficiency gain of NOMA over OMA indeed can be transferred to a reduction of the AoI.  In order to minimize the AoI, a dynamic policy to switch between NOMA and OMA was developed by formulating the AoI minimization problem as a Markov decision process  problem.   In \cite{9548950}, the impact of stochastic arrivals on the AoI achieved by NOMA was studied, where the performance gain of NOMA over OMA was shown to be significant for large arrival rates.  In \cite{9508961}, the AoI of NOMA assisted grant-free transmission was minimized by applying the tool of evolutionary game.  In \cite{9840754}, the AoI realized by reconfigurable intelligent surface (RIS) assisted NOMA transmission was analyzed,  and the application of NOMA to reduce the AoI of satellite communications was considered in~\cite{9734735}. 

This paper considers a general multi-user uplink communication network, where  OMA has already  been deployed  to serve the multiple users, i.e., there is a legacy network based on OMA. Because  TDMA outperforms FDMA in terms of   AoI, TDMA is  considered as an example for OMA in this paper.   Unlike the schemes reported in  \cite{9130084} and \cite{9771565}, which require a change of  the time frame structure of the legacy TDMA network, in this paper, NOMA is applied as an add-on to the TDMA legacy network. This   means that the AoI  is reduced with   minimal disruption to the legacy network, and hence the proposed NOMA protocols can be practically  implemented in various existing communication systems which are based on OMA.  The main contributions of the paper are the characterization
 of  the AoI achieved by the proposed NOMA protocols and  also the identification of   the benefits of using   NOMA to reduce the AoI of wireless networks, as explained in the following:
\begin{itemize}
\item For the case where each user's update is generated at the beginning of its transmit time slot, cognitive-radio inspired NOMA (CR-NOMA) is applied to ensure that each user has two opportunities  to deliver its update to the base station in each TDMA time frame \cite{Zhiguo_CRconoma}.    A closed-form expression for the AoI realized by CR-NOMA is developed to facilitate the  performance evaluation, and  an asymptotic analysis reveals  that at high signal-to-noise ratio (SNR), CR-NOMA and TDMA realize the same AoI. This conclusion is expected since, at high SNR, each user   needs one transmission only to ensure that its update is successfully delivered to the base station. However, at low SNR, the use of CR-NOMA can result in significant AoI reduction compared to TDMA, which demonstrates  that one benefit of using NOMA  is to offer users more chances to transmit, i.e., by using NOMA the users can update their base station more frequently.

\item For the case where each user's update is generated at the beginning of a TDMA time frame, a modified CR-NOMA protocol is developed to demonstrate another benefit of using NOMA   to improve the AoI. In particular, the modified CR-NOMA protocol effectively reduces   the users' access delay,  i.e.,   the users are scheduled  to transmit earlier  than in the TDMA case, which is useful for improving the freshness of the data available at the base station.  For example,     a user which is scheduled in a time slot close to the end of a TDMA frame  experiences severe access delay and hence suffers from a large AoI with TDMA, because it has to wait for a long time before it can send its update which has been generated at the beginning of the frame. The use of NOMA   ensures that this user can jump the queue and transmit earlier than with TDMA. An exact expression for the AoI achieved with the modified CR-NOMA protocol is obtained, and an asymptotic analysis demonstrates  that the use of NOMA yields a reduction of AoI at high    SNR, compared to TDMA. 
\end{itemize} 

   The remainder  of this paper is organized as follows. In Section \ref{section 2}, the system model and the considered data generation models are described.   In Sections \ref{section 3} and \ref{section 4}, two CR-NOMA based transmission protocols are introduced  and their impact on the AoI   is characterized.   Simulation results are presented in Section \ref{section 4}, and the paper is concluded in Section \ref{section 5}. Finally,   all the proofs are collected in the appendix.

\section{System Model}\label{section 2}
Consider a  communication network with $M$ users, denoted by ${\rm U}_m$, $1 \leq m\leq M$, sending their updates to the same base station. Assume that  an OMA based legacy network  has been employed    to serve these users. Because TDMA  yields smaller AoI than FDMA \cite{8995639},   TDMA is used as an example of OMA in this paper. There are $M$ time slots in each TDMA time frame and ${\rm U}_m$ is scheduled to transmit in the $m$-th time slot of each frame with   transmit power $P$, as shown in Fig. \ref{fig1}. Denote  the duration of each time slot   by $T$ s, and    the start of the $m$-th time slot in the $i$-th frame   by $t_i^m$, $1\leq m\leq M$. Therefore, with TDMA, each user can deliver one update to the base station every $MT$ s. 
    \begin{figure}[t]\centering \vspace{-1em}
    \epsfig{file=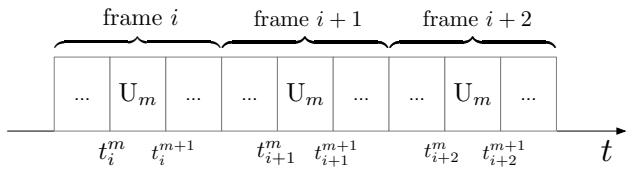, width=0.5\textwidth, clip=}\vspace{-0.5em}
\caption{Considered TDMA time frame structure.   \vspace{-1em}    }\label{fig1}   \vspace{-1em} 
\end{figure}

\subsection{AoI Model}
AoI indicates the freshness of the updates successfully delivered to  the base station and can be defined as follows. At time $t$, denote by $T_m(t)$  the
generation time of the freshest update   successfully delivered from  ${\rm U}_m$ to 
the base station.  Therefore,  ${\rm U}_m$'s instantaneous AoI   is defined as follows~\cite{8000687}: 
\begin{align}
\Delta_m(t) = t - T_m(t).
\end{align}
The normalized overall average AoI of the considered network is given by
\begin{align}
\bar{\Delta} =\frac{1}{M} \sum^{M}_{m=1} \underset{\bar{\Delta}_m}{\underbrace{\underset{T\rightarrow\infty}{\lim} \frac{1}{T}\int^{T}_{0}\Delta_m(t) dt}}.
\end{align} 

 \subsection{Data Generation Models}
When data is generated has significant  impact on the   AoI, and  the following two types of data generation are considered in this paper:

\subsubsection{The generate-at-will (GAW) model}  This model assumes that a new update is generated right  before the transmit time slot of the user \cite{8000687, 8845254,9130084 }. For example, if a user decides to transmit in the $m$-th time slot of the $i$-th frame, the update to be sent in this time slot is generated at $t_i^m$, as shown in Fig. \ref{fig2cc}(a).  As a widely used data generation model,  GAW has the advantage of improving  the freshness of the data by always transmitting  a freshly generated update. However, GAW    potentially  results in   high system complexity as  a user is required to repeatedly generate updates if the user is offered multiple chances to transmit in a short period.  

\subsubsection{The generate-at-request (GAR) model} With this model,  the base station requests each user to generate an update  at the beginning  of each time frame, instead of each time slot, as shown in Fig. \ref{fig2cc}(b). If retransmission is carried out within one frame, the same update will be sent. GAR is crucial  to synchronized sensing, and hence   important  in many Internet of Things (IoT) applications, such as  structural health monitoring and autonomous driving \cite{8000687x,9796958}.  GAR  can lead to larger AoI than GAW, since a user's access delay, i.e., the duration between   the   generation time of an update and the corresponding  transmit time, is included in the calculation of AoI. However, compared to GAW, GAR  can reduce system complexity and    energy consumption, since  GAR avoids asking the users to repeatedly generate updates for retransmission.  
 
    \begin{figure}[t]\centering \vspace{-1em}
    \epsfig{file=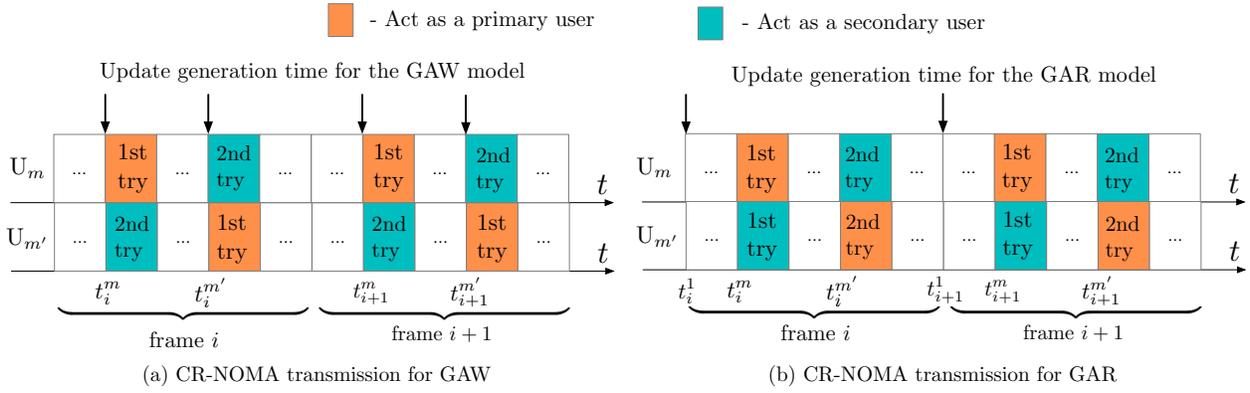, width=1\textwidth, clip=}\vspace{-0.5em}
\caption{Illustration for the two proposed CR-NOMA transmission schemes.   \vspace{-1em}    }\label{fig2cc}   \vspace{-1em} 
\end{figure}

\subsection{CR-NOMA Transmission}\label{subsection crnoma}
 CR-NOMA can be used as an add-on to TDMA to improve the freshness of the data collected  in the   network, as explained in the following.   To reduce   system complexity, the following simple user pairing scheme is considered. In particular, ${\rm U}_m$ and ${\rm U}_{m'}$, where $1\leq m \leq \frac{M}{2}$ and $m'=m+\frac{M}{2}$, are paired together to share the spectrum, and are allowed to transmit simultaneously in the     $m$-th and the $m'$-th time slots of each frame \footnote{For illustrative purposes, it is assumed that $M$ is an even number, but the proposed protocol can be straightforwardly extended to  the case with an odd number of users, e.g.,  one of the users can be simply served by  TDMA.  }.

The aim of       CR-NOMA transmission    is to increase the likelihood for each user to deliver one update to the base station every $MT$ s, compared to TDMA. Depending on the used data generation model,  the application of CR-NOMA transmission   is slightly different. In brief, for GAW, ${\rm U}_m$ and  ${\rm U}_{m'}$ use  the $m$-th and the $m'$-th time slots of each frame for their first tries for updating, respectively, as shown in Fig. \ref{fig2cc}(a). If their first tries are not successful,  ${\rm U}_m$ uses the $m'$-th time slot of the current frame, and ${\rm U}_m$ uses the $m$-th time slot of the next frame for   retransmission. For GAR, both the users use the $m$-th time slot for their first tries and  the $m'$-th time slot for their second tries, as  shown in Fig. \ref{fig2cc}(b).  The details of CR-NOMA transmission are provided  as follows.  

\subsubsection{CR-NOMA  Transmission for GAW}   In the $m$-th time slot of the $i$-th frame, ${\rm U}_m$ is treated as the primary user and is scheduled to transmit with transmit power $P$, in the same manner as   TDMA. If this transmission is not successful, ${\rm U}_m$ is offered another chance to transmit in the $m'$-th time slot as the secondary user with transmit power $P^S$, where     ${\rm U}_{m'}$ is     treated as the primary user in this time slot.    To ensure that the implementation of NOMA is transparent to the primary user, the secondary user's signal is decoded in the first stage of successive interference cancellation (SIC) at the base station and its data rate needs to be capped \cite{Zhiguo_CRconoma}. For example, in the $m'$-th time slot,  ${\rm U}_m$ is the secondary user and its data rate, denoted by $R_{m,i}^S$, is capped as follows:  
\begin{align}\label{crnoma}
R_{m,i}^S \leq  \log\left( 1+\frac{P^S|h_{m}^{i,m'}|^2}{P|h_{m'}^{i,m'}|^2+1}\right) ,
\end{align} 
where the binary logarithm is used, and  ${\rm U}_{n}$'s channel gain in the $k$-th time slot of the $i$-th frame is denoted by $h_n^{i,k}$. We note that the noise power is assumed to be normalized, and hence $P$ and $P^S$ are the effective transmit SNRs. The users' channel gains in different time slots are assumed to be independent and identically distributed (i.i.d.) and follow the complex Gaussian distribution with zero mean and unit variance. 

Similarly, in the $m'$-th time slot of the $i$-th frame, ${\rm U}_{m'}$ is scheduled to transmit as the primary user,  in the same manner  as in   TDMA. If this transmission is not successful, the user will   use the $m$-th time slot of the next frame to transmit  a new update. Because ${\rm U}_{m'}$ is the secondary user in the $m$-th time slot,  its data rate in this time slot needs to be   capped similarly as in \eqref{crnoma}.  
  
\subsubsection{CR-NOMA  Transmission for GAR}    Recall that with GAR, the users' updates are generated at the beginning of each TDMA frame. On the one hand,  in order to reduce the access delay,  ${\rm U}_{m'}$ will not wait for the $m'$-th time slot, but  use the $m$-th time slot to deliver its update. We note that  ${\rm U}_m$ is treated as the primary user in the $m$-th time slot, which means that ${\rm U}_{m'}$'s data rate needs to be  capped similarly as in \eqref{crnoma}.  Only if  ${\rm U}_{m'}$'s transmission in the $m$-th time slot is not successful, ${\rm U}_{m'}$ carries out a retransmission  in the $m'$-th time slot of the current frame. 

On the other hand, ${\rm U}_{m}$'s transmission strategy for GAR in the $m$-th time slot is the same as that for GAW. If ${\rm U}_{m}$ requires a retransmission, it will be   treated as the secondary user in the $m'$-th time slot, and its achievable   data rate in this time slot depends  on   its partner's transmission strategy, which is different from that for GAW. In particular, if ${\rm U}_{m'}$'s transmission in the $m$-th time slot is not successful, a retransmission from ${\rm U}_{m'}$ in the $m'$-th time slot is needed, and hence ${\rm U}_{m}$'s   data rate in this time slot is capped as in \eqref{crnoma}. However, if ${\rm U}_{m'}$'s transmission in the $m$-th time slot is   successful, ${\rm U}_{m'}$ keeps silent in the $m'$-th time slot, which means that ${\rm U}_{m}$'s achievable data rate in this time slot is simply given by: 
\begin{align}\label{crnoma2}
R_{m,i}^S =  \log\left( 1+ P^S|h_{m}^{i,m'}|^2 \right) . 
\end{align} 

The AoI achieved by  CR-NOMA transmission for the two data generation models will be analyzed in the following two sections, respectively. 

\section{AoI of CR-NOMA Transmission for GAW} \label{section 3}
In this section, the AoI achieved by CR-NOMA is studied     the GAW model, where   TDMA is used as a benchmarking scheme.  The benefit of using CR-NOMA to reduce the AoI can be clearly illustrated with the example shown in Fig. \ref{fig2}. Because GAW is used, the instantaneous AoI is reduced to $T$ whenever an update is successfully delivered to the base station. For the illustrated example,  ${\rm U}_m$ fails to deliver its update to the base station in the $m$-th time slot of  the $(i+1)$ frame.  With TDMA, the user has to wait until the next TDMA frame. However, with NOMA, the user has another chance for retransmission in the current frame, which is helpful to reduce the AoI.

 \begin{figure}[t] \vspace{-0em}
\begin{center}
\subfigure[AoI evolution of TDMA Transmission]{\label{fig2a}\includegraphics[width=0.45\textwidth]{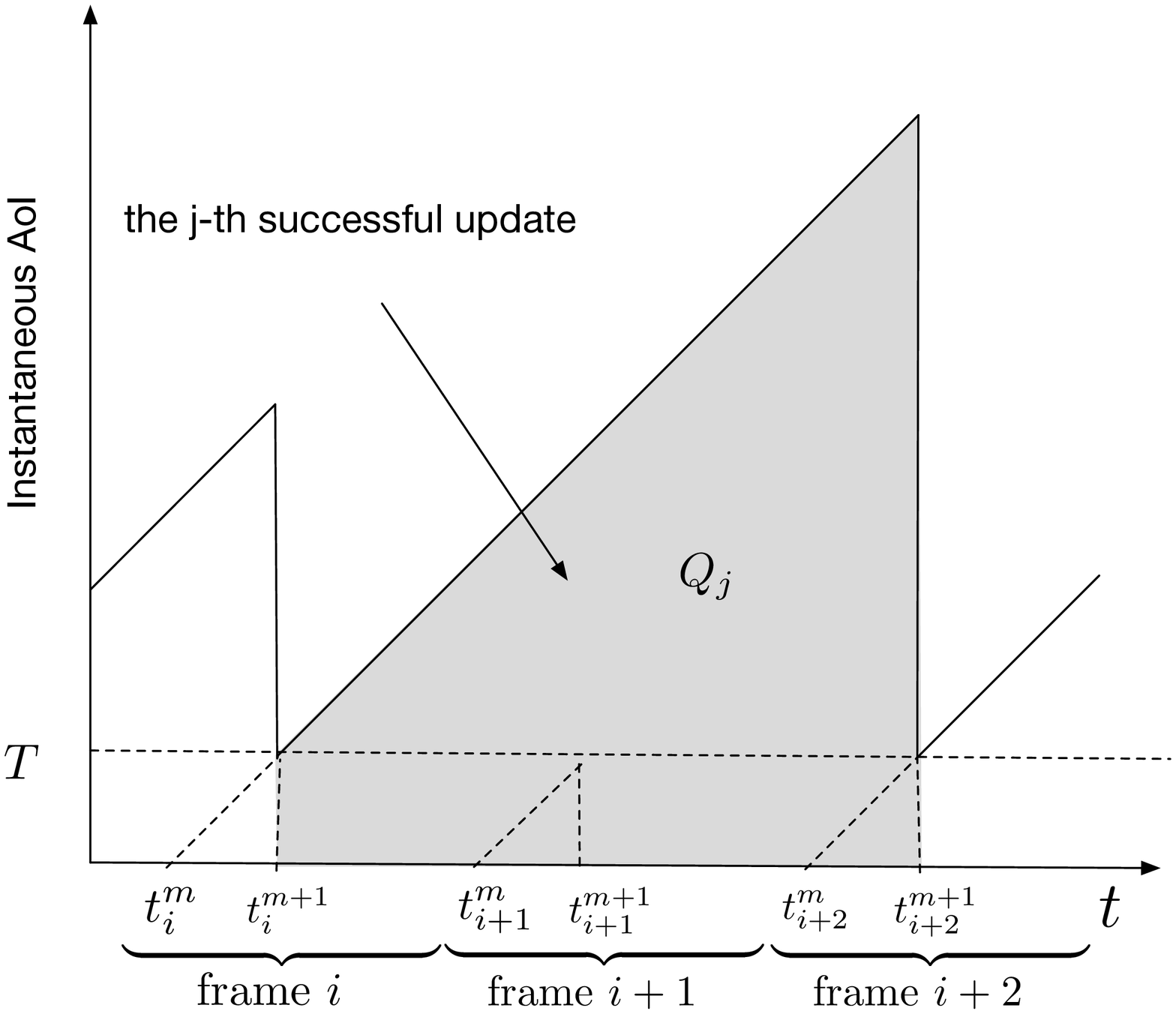}}\hspace{2em}
\subfigure[AoI evolution of NOMA Transmission]{\label{fig2b}\includegraphics[width=0.45\textwidth]{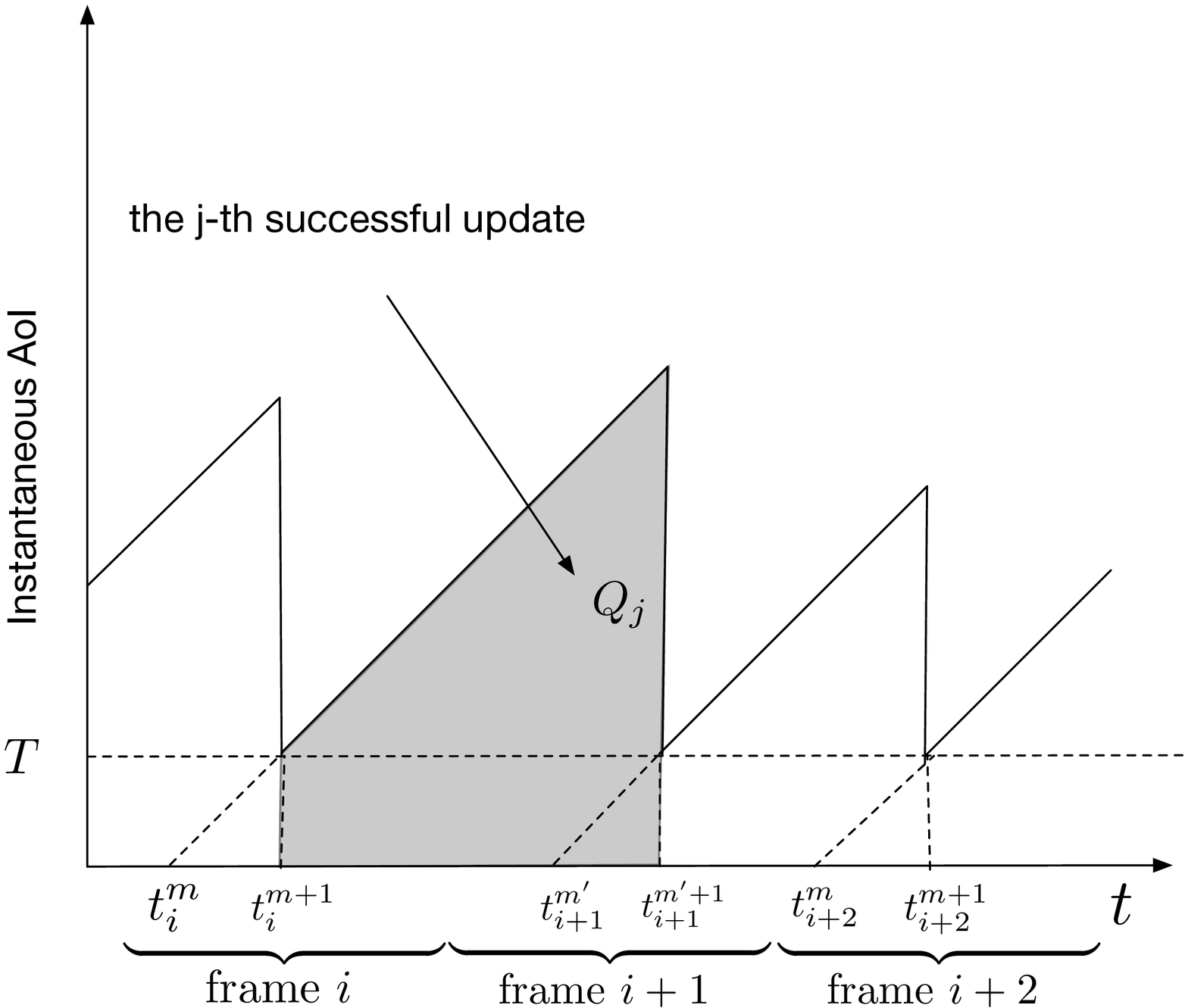}} \vspace{-1em}
\end{center}
\caption{ Illustration of AoI evolution for the two considered  transmission protocols.  The generate-at-will model    is considered. For TDMA, ${\rm U}_m$ relies on the $m$-th time slot of each frame, i.e., from $t_i^m$ to $t_i^{m+1}$ in frame $i$. For the illustrated example, ${\rm U}_m$ successfully delivers its update in frames $i$ and $(i+2)$, but fails in frame $(i+1)$.  For NOMA, the user also fails in the $m$-th time slot of frame $(i+1)$, but successfully delivers its update in the $m'$-th $\left(m'=m+\frac{M}{2}\right)$ time slot of the same frame, which improves data freshness.    \vspace{-1em} }\label{fig2}\vspace{-1em}
\end{figure}

\subsection{AoI Realized for TDMA}\label{sub tdma} The AoI realized for TDMA can be straightforwardly analyzed as follows. Without loss generality, we focus on ${\rm U}_m$'s average AoI realized for TDMA, denoted by $\bar{\Delta}^T_m$. Further denote the number of frames between the     $(j-1)$-th and the $j$-th successful updates by $x_j$.  In Fig. \ref{fig2}(a), an example with $x_j=2$ is shown. Therefore, finding the average AoI for TDMA is equivalent to find the area of the shaded region in Fig. \ref{fig2}(a), denoted by $Q_j$, which means that $\bar{\Delta}_m^T$ can be expressed as follows:
\begin{align}\label{tdma_aoi}
\bar{\Delta}_m^T = \underset{J\rightarrow \infty}{\lim} \frac{\sum^{J}_{j=1} Q_j }{\sum^{J}_{j=1}x_jMT} =T+ \frac{MT}{2} \frac{  \mathcal{E}\{X^2\}}{ \mathcal{E}\{X\}},
\end{align}
where $J$ denotes the total number of the successful updates, $Q_j=x_jMT^2+\frac{1}{2}x_j^2M^2T^2$, $\mathcal{E}\{X\}=\underset{J\rightarrow \infty}{\lim}\frac{1}{J}\sum^{J}_{j=1}x_j$, and $\mathcal{E}\{X^2\}=\underset{J\rightarrow \infty}{\lim}\frac{1}{J}\sum^{J}_{j=1}x_j^2$. Since the users' channel gains are assumed to be i.i.d., $x_j$ follows the geometric distribution, i.e.,    the probability mass function
 of $x_j$ is given by $\mathbb{P}(X=x_j)=p_e^{x_j-1} (1-p_e) $, where $p_e$ denotes the probability for the event that a user fails to deliver an update   in a time slot with $T$ s. By using the assumptions  that each update contains $N$ bits and the users' channel gains are complex Gaussian distributed,    $p_e=\mathbb{P}(T\log (1+P|h_{m}^{i,m'}|^2)\leq N)=1-e^{-\frac{\epsilon}{P}}$, where $\epsilon=2^{\frac{N}{T}}-1$. As a result, the normalized average AoI achieved by TDMA can be obtained as follows:
 \begin{align}\label{tdma}
 \bar{\Delta}^T \overset{(1)}{=}\bar{\Delta}_m^T  \overset{(2)}{=}  T+ \frac{MT}{2} \left(2e^{\frac{\epsilon}{P}}-1 \right),
 \end{align}
 where the first step follows by the fact that the users experience the same AoI in TDMA, and the second step   follows by   using the mean and the variance of the geometric distribution.

\subsection{AoI Realized for CR-NOMA} 
The analysis of the AoI realized with CR-NOMA is more challenging than that for TDMA. The reason is that each user has two transmission opportunities  in each TDMA frame, which means that the time interval  between the two adjacent successful updates is not always a multiple   of $MT$.  The following lemma provides a closed-form expression for the AoI realized by CR-NOMA.

 \begin{lemma}\label{lemma1}
For the case of the GAW model, the normalized overall average AoI realized by   CR-NOMA   is given by 
 \begin{align}
\bar{\Delta}^N  = T+ \Delta(p_0, p_m, p_{m'}),
\end{align}
where $\Delta(x, y, z)$ is defined as follows:
\begin{align}
\Delta(x, y, z) = \frac{MT}{4} \frac{ 2(y+z)^2 (1+x) +   yz(1-x)^2}{ (y+z)^2  (1-x) },
\end{align}
 $p_m = e^{-\frac{\epsilon}{P^S}}$,   $
p_{m'} =    \left(1-e^{- \frac{\epsilon}{P^S}}\right)   e^{-\frac{\epsilon}{P^S} }\frac{1}{1+\frac{P\epsilon}{P^S}} $, 
and $
p_0 
= \left(1-e^{- \frac{\epsilon}{P^S}}\right)  \left(1- \frac{e^{-\frac{\epsilon}{P^S} }}{1+\frac{P\epsilon}{P^S}}\right)   $.
\end{lemma}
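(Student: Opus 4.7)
The plan is to apply the renewal-reward characterization of AoI. By the symmetry of the user pairing and the i.i.d.\ channel model, every user realizes the same long-run average AoI, so it suffices to evaluate $\bar{\Delta}_m^N$. Because the GAW model resets the instantaneous AoI to $T$ at each successful delivery, the sawtooth area between two consecutive deliveries separated in time by $L$ is $Q = TL + L^2/2$, yielding the standard identity
\[
\bar{\Delta}_m^N \;=\; \frac{\mathbb{E}[Q]}{\mathbb{E}[L]} \;=\; T \;+\; \frac{\mathbb{E}[L^2]}{2\,\mathbb{E}[L]}.
\]

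The first substantive step is to enumerate the three mutually exclusive per-frame outcomes for ${\rm U}_m$: (A) success as the primary in slot $m$, with probability $p_m$; (B) failure in slot $m$ followed by success as the secondary in slot $m'$, with probability $p_{m'}$; and (C) failure at both attempts, with probability $p_0 = 1 - p_m - p_{m'}$. Since $|h|^2$ is exponential with unit mean, the primary-slot success reduces to a direct Rayleigh-fading calculation. For event (B), the SIC rate constraint \eqref{crnoma} translates into $|h_m^{i,m'}|^2 \geq \epsilon(P|h_{m'}^{i,m'}|^2+1)/P^S$, and taking the expectation over the two independent exponentials produces a conditional secondary success probability of the form $e^{-\epsilon/P^S}/(1+P\epsilon/P^S)$, which when multiplied by the primary failure probability recovers the expression stated for $p_{m'}$.

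Next, since frames are independent, the number $K$ of consecutive failure frames between two successes is geometric with success probability $1-p_0$, giving $\mathbb{E}[K+1] = 1/(1-p_0)$ and $\mathbb{E}[(K+1)^2] = (1+p_0)/(1-p_0)^2$. Introducing the slot offsets $d_m = 0$ and $d_{m'} = MT/2$, I would write $L = (K+1)MT + d_{s_j} - d_{s_{j-1}}$, where $s_{j-1}, s_j \in \{m,m'\}$ are the slot types of the two successes. Conditional on the frame being successful, $\mathbb{P}(s = m) = p_m/(p_m+p_{m'})$, and the independence of frames makes $K$, $s_{j-1}$, $s_j$ mutually independent, so the cross term in $\mathbb{E}[L^2]$ vanishes and a short calculation gives
\[
\mathbb{E}[L] = \frac{MT}{1-p_0}, \qquad \mathbb{E}[L^2] = \frac{M^2 T^2 (1+p_0)}{(1-p_0)^2} + \frac{M^2 T^2\, p_m p_{m'}}{2(p_m+p_{m'})^2}.
\]

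Substituting these moments into $\bar{\Delta}_m^N = T + \mathbb{E}[L^2]/(2\mathbb{E}[L])$ and placing both terms over the common denominator $(p_m+p_{m'})^2(1-p_0)$ recovers $\Delta(p_0, p_m, p_{m'})$ verbatim. The main obstacle I anticipate is the bookkeeping for the joint law of $(s_{j-1}, K, s_j)$: one must verify that conditioning a frame on being successful does not distort the within-frame slot-selection probabilities, and that the slot types of two distinct successful frames are genuinely independent. Once this independence is secured, what remains is a routine moment computation for the geometric distribution together with a straightforward algebraic simplification.
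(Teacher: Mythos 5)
Your proposal is correct and follows essentially the same route as the paper's proof: a renewal-reward computation of $T+\mathcal{E}\{Y^2\}/(2\mathcal{E}\{Y\})$, with the inter-success interval decomposed into a geometric number of frames plus a $\pm\frac{M}{2}T$ slot-offset correction (your $d_{s_j}-d_{s_{j-1}}$ is exactly the paper's four-case enumeration of $y_j$), and the same three per-frame outcome probabilities $p_m,p_{m'},p_0$. The independence of $K$, $s_{j-1}$, $s_j$ that you flag does hold under the i.i.d.\ channel assumption, and your moment expressions reproduce $\Delta(p_0,p_m,p_{m'})$.
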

\begin{proof}
See Appendix \ref{proof1}.
\end{proof}

Based on the closed-form expression of the AoI given in Lemma \ref{lemma1}, an asymptotic analysis  can be carried out to obtain an insightful understanding of the impact of NOMA on the AoI.  For example,  consider the following high SNR scenario,  $P^S=P\rightarrow \infty$, which means that $p_m\approx 1-\frac{\epsilon}{P}$. $p_{m'}$ can be approximated as follows:
 \begin{align} 
p_{m'} 
=&  \left(1-e^{- \frac{\epsilon}{P^S}}\right)   e^{-\frac{\epsilon}{P^S} }\frac{1}{1+\frac{P\epsilon}{P^S}}  \approx  \frac{ \epsilon}{P(1+\epsilon)}  .
\end{align}
Furthermore, $p_0 $ can be approximated at high SNR as follows:
\begin{align} 
p_0 \nonumber
=&  \left(1-e^{- \frac{\epsilon}{P^S}}\right)  \left(1- \frac{e^{-\frac{\epsilon}{P^S} }}{1+\frac{P\epsilon}{P^S}}\right)  
\approx    \frac{\epsilon^2}{P (1+\epsilon)}.
\end{align}
By using the high-SNR approximations of $p_0$, $p_m$ and $p_{m'}$, the AoI achieved by CR-NOMA can be approximated as follows:
 \begin{align}\label{noma1}
\bar{\Delta}^N  &\approx T+ \frac{MT}{4} \frac{ 2(p_m +p_{m'})^2   +   p_mp_{m'} }{ (p_m+p_{m'})^2   }\\\nonumber
&\overset{(1)}{\approx} T+ \frac{MT}{4} \frac{ 2(p_m )^2     }{ (p_m )^2   } = T+\frac{MT}{2},
\end{align}
 where step 1 follows by the fact that $p_m\gg p_{m'}$ at high SNR. 
 
 On the other hand,   the AoI realized by TDMA can be approximated at high SNR as follows:
  \begin{align}\label{tdma1}
 \bar{\Delta}^T =  T+ \frac{MT}{2} (2e^{\frac{\epsilon}{P}}-1 ) \approx T+ \frac{MT}{2}.
 \end{align}
 Comparing the AoI shown in \eqref{noma1} and \eqref{tdma1}, the following corollary can be obtained.  
 
\begin{corollary}\label{corollary1}
For the case of the GAW model, at high SNR, i.e.,   $P=P^S\rightarrow \infty$,   the normalized average AoI achieved by CR-NOMA is   same as that of TDMA . 
\end{corollary}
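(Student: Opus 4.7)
The plan is to obtain the corollary as a direct comparison of the two high-SNR limits, using Lemma \ref{lemma1} for the NOMA side and equation \eqref{tdma} for the TDMA side. Since both closed-form expressions are already available, the task reduces to controlling the three probabilities $p_0,p_m,p_{m'}$ as $P=P^S\to\infty$ and plugging them into $\Delta(p_0,p_m,p_{m'})$.

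First I would expand each probability as $P^S=P\to\infty$. Using $e^{-\epsilon/P}=1-\epsilon/P+O(P^{-2})$, the survival probability $p_m=e^{-\epsilon/P^S}$ tends to $1$, while the joint event probabilities
\begin{align}
p_{m'}=\bigl(1-e^{-\epsilon/P^S}\bigr)e^{-\epsilon/P^S}\frac{1}{1+P\epsilon/P^S}\sim\frac{\epsilon}{P(1+\epsilon)},\qquad p_0\sim\frac{\epsilon^2}{P(1+\epsilon)}
\end{align}
both vanish like $1/P$. The key qualitative fact is therefore $p_m\gg p_{m'}$ and $p_0\to 0$; this is what makes the correction term in $\Delta(\cdot)$ collapse.

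Next I would substitute into
\begin{align}
\Delta(p_0,p_m,p_{m'})=\frac{MT}{4}\frac{2(p_m+p_{m'})^2(1+p_0)+p_mp_{m'}(1-p_0)^2}{(p_m+p_{m'})^2(1-p_0)}.
\end{align}
Because $p_0\to 0$, the factors $(1+p_0)$ and $(1-p_0)^2$ both tend to $1$, and the denominator $(p_m+p_{m'})^2(1-p_0)\to p_m^2$. For the numerator, the first term tends to $2p_m^2$ while the second term $p_mp_{m'}\to 0$ is negligible relative to $p_m^2$. Hence $\Delta(p_0,p_m,p_{m'})\to MT/2$, so $\bar{\Delta}^N\to T+MT/2$.

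Finally I would handle the TDMA side by taking $P\to\infty$ in \eqref{tdma}, which gives $e^{\epsilon/P}\to 1$ and thus $\bar{\Delta}^T\to T+MT/2$. Comparing the two limits yields the claim. There is essentially no obstacle here beyond careful bookkeeping of which terms dominate; the only small subtlety is verifying that the $p_mp_{m'}$ cross term in the numerator is indeed negligible compared to $(p_m+p_{m'})^2$, which follows since $p_{m'}=O(1/P)$ whereas $p_m=1-O(1/P)$, so the ratio of the two terms is $O(1/P)\to 0$.
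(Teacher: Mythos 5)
Your proposal is correct and follows essentially the same route as the paper: expand $p_m$, $p_{m'}$, $p_0$ for $P=P^S\to\infty$, observe that $p_0\to 0$ and $p_{m'}=O(1/P)\ll p_m\to 1$ so that $\Delta(p_0,p_m,p_{m'})\to MT/2$, and match this against the TDMA limit $T+MT/2$ obtained from \eqref{tdma}. No gaps.
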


{\it Remark 1:} The conclusion shown in Corollary  \ref{corollary1} is expected since, at high SNR, ${\rm U}_m$'s first try, i.e., its transmission in the $m$-th time slot, is almost guaranteed to be successful. Therefore, retransmission is not needed at high SNR, and CR-NOMA is reduced  to TDMA, which explains why the two protocols achieve  the same AoI at high SNR. However, it is important to point out that the use of CR-NOMA can result in a significant performance gain over TDMA in the low  SNR regime, as shown in the simulation section. 

{\it Remark 2:} The use of CR-NOMA means that a user may need to transmit  two times every $MT$ s, which can cause  a higher  energy consumption than for  TDMA. In order to avoid this drawback,  prior to its transmit time slot,  a user can first calculate its data rate supported by CR-NOMA, and transmit only    if this data rate is   sufficient to deliver its update. This implementation requires that each user knows its own channel state information (CSI) as well as that of its partner. This CSI assumption can be realized as follows. The base station   first broadcasts a pilot signal before a  time slot, and then each user   carries  out channel estimation individually. In addition, the CSI of each user's partner can be obtain either via device-to-device communication between the partners or via a dedicated control channel.

 \begin{figure}[t] \vspace{-0em}
\begin{center}
\subfigure[AoI evolution of TDMA Transmission]{\label{fig3a}\includegraphics[width=0.45\textwidth]{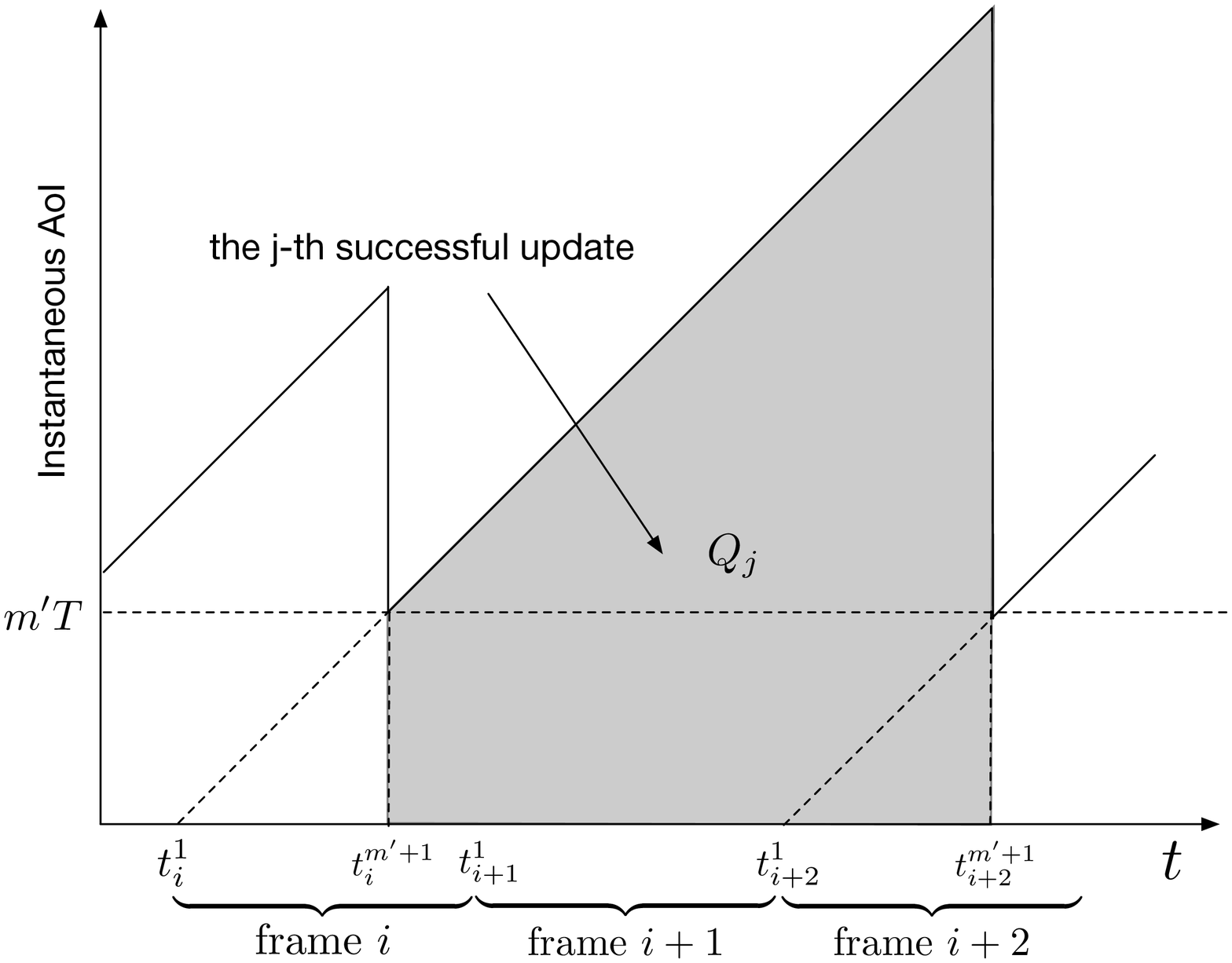}}\hspace{2em}
\subfigure[AoI evolution of NOMA Transmission]{\label{fig3b}\includegraphics[width=0.45\textwidth]{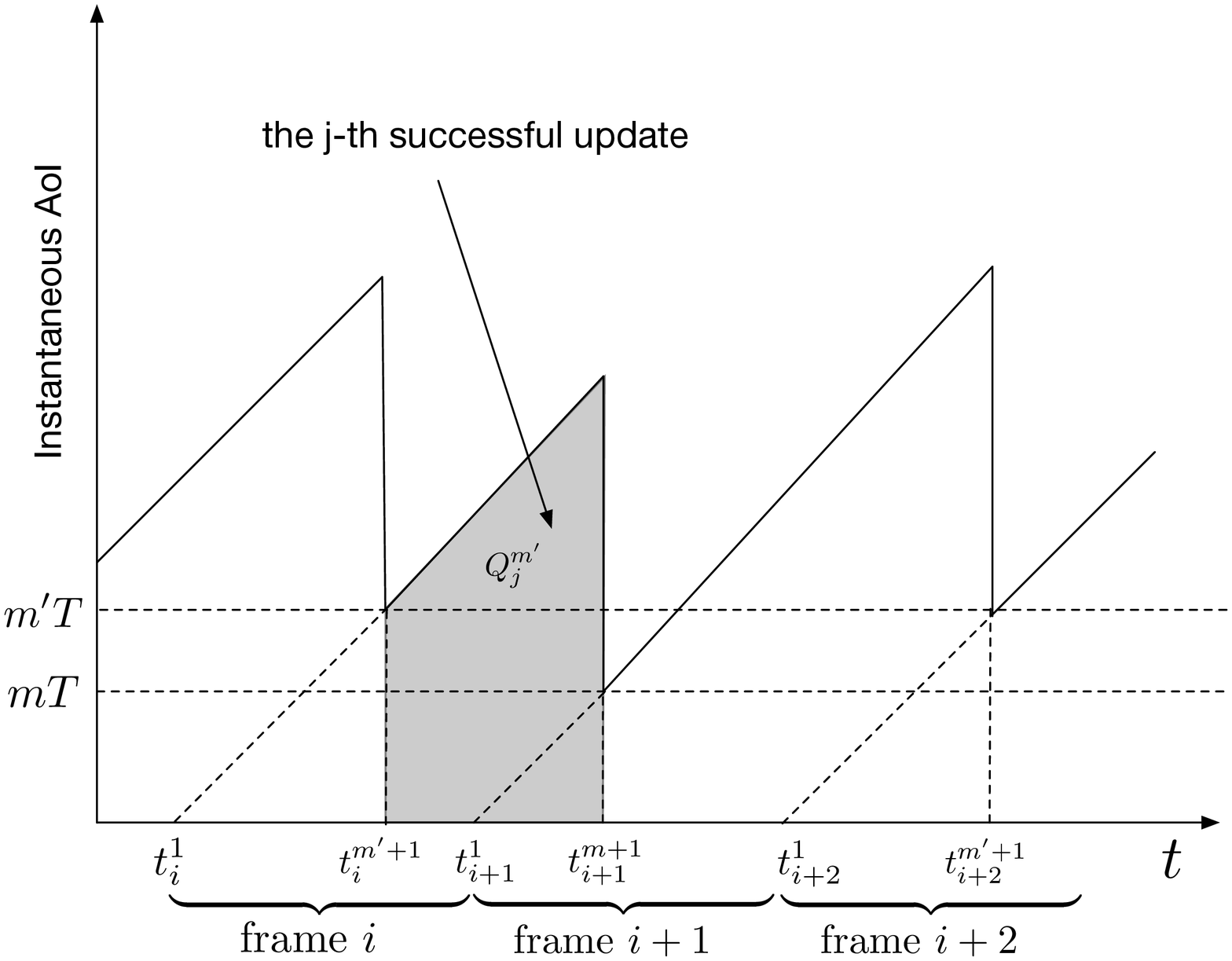}} \vspace{-1em}
\end{center}
\caption{Illustration of AoI evolution for the   considered    protocols.  The generate-at-request model    is considered.  For the illustrated example, ${\rm U}_{m'}$ successfully delivers its update in frames $i$ and $(i+2)$, but fails in the $m'$-th time slot of frame $(i+1)$ for TDMA. For NOMA, the user    successfully delivers its update by using the $m$-th  time slot of frame $(i+1)$, which improves data freshness.   \vspace{-1em} }\label{fig3}\vspace{-2em}
\end{figure}

\section{AoI of CR-NOMA Transmission for GAR} \label{section 4}
Unlike the GAW model, the GAR model requires the users to generate their updates at the beginning of each frame, which means that the access  delay needs to be included in the AoI.  The impact of the access delay on the AoI can be illustrated by using   Fig. \ref{fig3}, where   ${\rm U}_{m'}$'s AoI experience  is considered. With TDMA, ${\rm U}_{m'}$ has to wait until  the $m'$-th time slot of each frame for its transmission. This   means that   the user's instantaneous AoI drops   to    $m'T$,  since its update is generated at the beginning of the frame.   With CR-NOMA, ${\rm U}_{m'}$ can  also use the $m$-th time slot of each frame, which results in two benefits for reducing the AoI. One is that with two chances to   transmit every $MT$ s, the likelihood of a failed update is reduced, which is similar to the      GAR case. The second benefit  is that the use of NOMA can effectively reduce the access delay, since  ${\rm U}_{m'}$ can transmit earlier than for   TDMA  by using the $m$-th time slot instead of the $m'$-th time slot, and   its  instantaneous AoI can  drop  to $mT$.

\subsection{AoI Realized for TDMA} To be consistent with the illustration of Fig. \ref{fig3}, ${\rm U}_{m'}$'s AoI is considered, but ${\rm U}_{m}$'s AoI can be analyzed  similarly. As shown in Fig. \ref{fig3a},  in TDMA,   ${\rm U}_{m'}$ is allowed to use the $m'$-th time slot only, i.e.,  its transmit time slot  ends  at $t_i^{{m'}+1}$ in the $i$-th frame,  where it is important to point out that its update  is generated at $t_i^1$.  Again denote  by $x_j$ the number of frames between the $(j-1)$-th and the $j$-th successful updates, where an example with $x_j=2$ is shown in Fig. \ref{fig3}.

Similar to the GAW case, the time span between the two adjacent successful updates is   $x_jMT$; however, the AoI does not drop  to $T$ whenever there is a successful update. Instead, the AoI drops  to ${m'}T$ only, because ${\rm U}_{m'}$'s update is generated at the beginning of the frame.  Therefore, ${\rm U}_{m'}$'s  averaged AoI achieved by TDMA is given by
\begin{align}
\bar{\Delta}_{m'}^T =& \underset{J\rightarrow \infty}{\lim} \frac{\sum^{J}_{j=1} Q_j }{\sum^{J}_{j=1}x_jMT} \\\nonumber
=&  \underset{J\rightarrow \infty}{\lim} \frac{\sum^{J}_{j=1} \left(
{m'}Tx_jMT + \frac{1}{2}x_j^2M^2T^2
\right) }{\sum^{J}_{j=1}x_jMT}
= {m'}T+ \frac{MT}{2}   \frac{\mathcal{E}\{  x_j^2\}
  }{\mathcal{E}\{x_j\} }.
\end{align}
By using   steps similar to those in Section \ref{sub tdma}, ${\rm U}_{m'}$'s  averaged AoI achieved by TDMA for the GAR model can be obtained as follows:
 \begin{align}\label{tdma_aoi new}
 \bar{\Delta}_{m'}^T  =  {m'}T+ \frac{MT}{2}\left (2e^{\frac{\epsilon}{P}}-1 \right).
 \end{align}

Comparing \eqref{tdma_aoi} to \eqref{tdma_aoi new}, one can find that with the GAR model, the users experience a larger AoI, which is mainly  due to the access delay, i.e., a user generates its update     at the beginning of one frame but has to wait for its turn to deliver  the update.  
%
%
 
\subsection{AoI Realized for CR-NOMA}\label{subsection IV.B}
 The AoI realized by CR-NOMA in the GAR case is   challenging to analyze due to the fact that  a user's instantaneous AoI does not always drop  to the same value, which is different from the GAW case. In particular, in the GAW case, a user's instantaneous AoI is always reduced to $T$, regardless which time slot is used for update delivery. However, in the GAR case,   if ${\rm U}_m$ successfully delivers its update in the $m$-th time slot of a frame, its instantaneous AoI   drops to $mT$, whereas its instantaneous AoI   drops  to $m'T$ if the transmission in the $m'$-th time slot is successful. What makes the AoI analysis more complicated is that the calculation of the shaded region $Q_j^{m'}$ shown in Fig. \ref{fig3}  does not only depend on which time slot is used for the $(j-1)$-th successful update but also on which time slot is used for the $j$-th successful update.  Furthermore, whether a user can successfully deliver its update to the base station is  also affected by its partner's transmission strategy, as discussed in Section \ref{subsection crnoma}.   The following lemma provides a closed-form expression for the AoI realized by CR-NOMA in the GAR case. 
\begin{lemma}\label{lemma2}
 For the case of the GAR model, ${\rm U}_{k}$'s   average AoI realized by   CR-NOMA   is given by  
\begin{align}\label{lemma2main}
\bar{\Delta}^N_k =& \Delta_{k,0}  + \Delta(p_{0k}, p_{mk}, p_{m'k}),
\end{align}
where $k\in\{m, m'\}$, $1\leq m \leq \frac{M}{2}$,   $\Delta_{k,0}  $ is given by
\begin{align}
\Delta_{k,0}  =& \frac{(1-p_{0k})^2}{(p_{mk}+p_{m'k})^2  }   \left[ \left(  (p_{mk} +p_{m'k} )  \frac{mTp_{mk}}{(1- p_{0k})^2} + \frac{p_{m'k}}{2} \frac{mTp_{mk}}{1- p_{0k}}\right)\right. \\  &+ \left.  \left( (p_{mk} +p_{m'k} )  \frac{m'Tp_{m'k}}{(1- p_{0k})^2} - \frac{p_{mk}}{2} \frac{m'Tp_{m'k}}{1- p_{0k}}\right)  \right],
\end{align}
 $p_{mm} = e^{-\frac{\epsilon}{P^S}}$,    $
p_{m'm} =     \left(1-e^{-\frac{\epsilon}{P}}-  e^{- \frac{\epsilon}{P^S}}\tau\right)   e^{-\frac{\epsilon}{P^S} }\frac{1}{1+\frac{P\epsilon}{P^S}} 
+ e^{- \frac{2\epsilon}{P^S} }\tau  $, $
p_{0m} 
= \left(1-e^{-\frac{\epsilon}{P}}-  e^{- \frac{\epsilon}{P^S}}\tau\right)  \left(1- \frac{e^{-\frac{\epsilon}{P^S} }}{1+\frac{P\epsilon}{P^S}}\right)
 + e^{- \frac{\epsilon}{P^S} }\tau\left(1-e^{- \frac{\epsilon}{P^S}}\right)  $ , 
$p_{0m'} =  \left(1-e^{-\frac{\epsilon}{P^S}}\frac{1}{1+\frac{\epsilon P}{P^S}}\right)(1-e^{-\frac{\epsilon}{P}})$, $p_{mm'}  =e^{-\frac{\epsilon}{P^S}}\frac{1}{1+\frac{\epsilon P}{P^S}}$,   $p_{m'm'} =  \left(1-e^{-\frac{\epsilon}{P^S}}\frac{1}{1+\frac{\epsilon P}{P^S}}\right)e^{-\frac{\epsilon}{P}}$ and $\tau=\frac{1-e^{-\left(\frac{\epsilon}{P^S}P+1\right)\frac{\epsilon}{P} }}{\frac{\epsilon}{P^S}P+1}$. The normalized overall average AoI realized by CR-NOMA is given by $\bar{\Delta}^N = \frac{1}{M}\sum^{\frac{M}{2}}_{m=1} \left( \bar{\Delta}^N_m+\bar{\Delta}^N_{m'}\right)$.  
\end{lemma}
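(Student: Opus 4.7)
The plan is to reduce the analysis to the same renewal-reward ratio used in Lemma \ref{lemma1}, but with the per-frame outcome space enlarged from a binary success/failure into a three-valued outcome that records which (if any) of the two slots delivered the update for user $U_k$.

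First I would compute the three probabilities $p_{mk}$, $p_{m'k}$, and $p_{0k}$ directly from the CR-NOMA protocol of Section \ref{subsection crnoma}, where $p_{mk}$ is the probability that $U_k$'s update is delivered in the $m$-th slot, $p_{m'k}$ is the probability of failure in the $m$-th slot followed by success in the $m'$-th slot, and $p_{0k}=1-p_{mk}-p_{m'k}$ is the probability of complete failure in both slots. For $k=m'$ the calculation factorises cleanly because $U_{m'}$ is the secondary user in the $m$-th slot (rate capped as in \eqref{crnoma}) and, independently, the primary user in the $m'$-th slot only when needed. For $k=m$ the computation is subtler, because the rate achievable by $U_m$ in the $m'$-th slot is dictated by \eqref{crnoma} if $U_{m'}$ retransmits and by \eqref{crnoma2} if $U_{m'}$ has already succeeded. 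Hence I must split the computation of $p_{m'm}$ and $p_{0m}$ according to which event occurred for $U_{m'}$ in the $m$-th slot, and integrate the resulting capped-rate condition against two independent exponential random variables $|h_m^{i,m}|^2$ and $|h_{m'}^{i,m}|^2$. The auxiliary quantity $\tau=(1-e^{-(\epsilon P/P^S+1)\epsilon/P})/(\epsilon P/P^S+1)$ in the lemma statement arises precisely from this nested integral.

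Next I would set up the renewal-reward framework. Let $x_j$ be the number of frames between the $(j-1)$-th and $j$-th successful updates of $U_k$; since frames are i.i.d.\ under the Rayleigh model, $x_j$ is geometric with parameter $1-p_{0k}$, and conditional on success the update lands in slot $m$ with probability $p_{mk}/(1-p_{0k})$ and in slot $m'$ with probability $p_{m'k}/(1-p_{0k})$. The shaded area $Q_j^k$ between two adjacent successes now depends jointly on the slot of the $(j-1)$-th success (which fixes the drop value $mT$ or $m'T$ from which the sawtooth starts) and on the slot of the $j$-th success (which offsets the right endpoint of the span by either $mT$ or $m'T$ inside the final frame). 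I would decompose $Q_j^k$ into a rectangle of height equal to the starting drop value and width $x_jMT$, a triangular growth part $\tfrac{1}{2}x_j^2M^2T^2$, and a terminal correction tied to the $j$-th success slot; by the conditional independence of successive success slots I average over the four joint slot combinations, divide by $\mathcal{E}\{x_j\}MT=MT/(1-p_{0k})$, and rearrange. The drop-value contributions produce the $mTp_{mk}$ and $m'Tp_{m'k}$ terms that make up $\Delta_{k,0}$, while the $\mathcal{E}\{x_j^2\}/\mathcal{E}\{x_j\}$ ratio of the geometric distribution is identical to the one computed in Section \ref{sub tdma} and therefore yields the same function $\Delta(p_{0k},p_{mk},p_{m'k})$ as in Lemma \ref{lemma1}.

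The main obstacle will be the calculation of $p_{m'm}$ and $p_{0m}$, since whether the rate cap of \eqref{crnoma} or the unconstrained expression \eqref{crnoma2} governs $U_m$ in the $m'$-th slot is itself a random event determined by the outcome of $U_{m'}$ in the $m$-th slot. Correctly enumerating the joint event $\{U_m\ \text{fails in slot }m,\ U_{m'}\ \text{succeeds or fails in slot }m,\ U_m\ \text{succeeds in slot }m'\}$ and evaluating the resulting double integral over two independent exponential channel gains is where almost all of the bookkeeping lives. Once these probabilities are in hand the renewal argument proceeds along the same lines as Lemma \ref{lemma1}, and the network-wide average follows by symmetry across paired users as $\bar{\Delta}^N=\frac{1}{M}\sum_{m=1}^{M/2}(\bar{\Delta}_m^N+\bar{\Delta}_{m'}^N)$.
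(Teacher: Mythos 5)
Your proposal is correct and follows essentially the same route as the paper's proof: the same renewal--reward ratio with a three-valued per-frame outcome, the same split of the shaded area into a drop-value rectangle (yielding $\Delta_{k,0}$ via conditioning on which slot the previous and next successes occupy) plus the quadratic term that reproduces $\Delta(p_{0k},p_{mk},p_{m'k})$ exactly as in Lemma \ref{lemma1}, and the same case split on ${\rm U}_{m'}$'s outcome in the $m$-th slot whose nested exponential integral produces $\tau$. No substantive difference from the paper's argument.
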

\begin{proof}
See Appendix \ref{proof3}. 
\end{proof}

Although the AoI expression shown in Lemma \ref{lemma2} is lengthy and complicated, it can be used to obtain an insightful understanding of the impact of NOMA on the AoI. Let us consider the   high SNR scenario, i.e., $P^S=P\rightarrow \infty$. Because the use of NOMA reduces the access delay for ${\rm U}_{m'}$, it is expected that ${\rm U}_{m'}$'s AoI experience for TDMA and NOMA will  be significantly different, and hence we focus on ${\rm U}_{m'}$'s AoI   in the following.

With some straightforward algebraic manipulations, at high SNR,  the following approximations can be obtained: $p_{0m'} \approx  \frac{\epsilon}{1+ \epsilon }  \frac{\epsilon}{P}$, $p_{mm'}  \approx \frac{1-\frac{\epsilon}{P}}{1+ \epsilon }$, and $p_{m'm'} \approx   \frac{\epsilon}{1+ \epsilon } \left(1-\frac{\epsilon}{P}\right)$, which lead to the following approximation:    $p_{mm'} +p_{m'm'}\approx  1-\frac{\epsilon}{P}$. 
On the one hand, by using these  approximations, the first term in \eqref{lemma2main}, $\Delta_{m',0}  $, can be approximated at high SNR as follows:
\begin{align} 
\Delta_{m',0}  
 \approx& \frac{1}{(1-\frac{\epsilon}{P})^2  }  \left[ \left( \left (1-\frac{\epsilon}{P}\right)   mT \frac{1-\frac{\epsilon}{P}}{1+ \epsilon } +  \frac{\epsilon \left(1-\frac{\epsilon}{P}\right)}{2(1+ \epsilon) }   mT \frac{1-\frac{\epsilon}{P}}{1+ \epsilon } \right)\right. \\\nonumber &+ \left.  \left( \left(1-\frac{\epsilon}{P}\right)   m'T\frac{\epsilon \left(1-\frac{\epsilon}{P}\right)}{1+ \epsilon }  -  \frac{1-\frac{\epsilon}{P}}{2(1+ \epsilon )} m'T\frac{\epsilon \left(1-\frac{\epsilon}{P}\right)}{1+ \epsilon }\right)  \right]\\\nonumber
 =&  \frac{1}{(1+\epsilon)^2}  \left[   mT(1+\epsilon)   +  \frac{\epsilon }{2  }   mT  +    m'T \epsilon ( 1+ \epsilon )  -  \frac{1 }{2 } m'T \epsilon     \right].
\end{align}
On the other hand, the second term in \eqref{lemma2main}, $ \Delta(p_{0m'}, p_{mm'}, p_{m'm'})$, can be approximated as follows:
\begin{align}
 \Delta(p_{0m'}, p_{mm'}, p_{m'm'})=&\frac{MT}{4} \frac{ 2(p_{mm'}+p_{m'm'})^2 (1+p_{0m'}) +   p_{mm'}p_{m'm'}(1-p_{0m'})^2}{ (p_{mm'}+p_{m'm'})^2  (1-p_{0m'}) }
 \\\nonumber \approx &
 \frac{MT}{4} \frac{ 2\left(1-\frac{\epsilon}{P}\right)^2   +   \frac{1-\frac{\epsilon}{P}}{1+ \epsilon } \frac{\epsilon}{1+ \epsilon } \left(1-\frac{\epsilon}{P}\right) }{ (1-\frac{\epsilon}{P})^2  }
  \\\nonumber \approx &
 \frac{MT}{2}\left(1   +   \frac{\epsilon }{2(1+ \epsilon)^2 }   \right).
\end{align}

Therefore, ${\rm U}_{m'}$'s AoI can be approximated at high SNR as follows:
\begin{align}
\bar{\Delta}_{m'}^N =&\Delta_{m',0}  + \Delta(p_{0m'}, p_{mm'}, p_{m'm'})
\\\nonumber
\approx&   \frac{1}{(1+\epsilon)^2}  \left[   mT(1+\epsilon)   +  \frac{\epsilon }{2  }   mT  +    m'T \epsilon ( 1+ \epsilon )  -  \frac{1 }{2 } m'T \epsilon     \right] + \frac{1}{2}   MT  \left(1 +   \frac{\epsilon }{2(1+ \epsilon )^2}\right) \\\nonumber 
=&   \frac{1}{(1+\epsilon)}  \left[   mT      +    m'T \epsilon      \right] + \frac{1}{2}   MT  ,
\end{align}
where the last step follows by using the fact that $m'=m+\frac{M}{2}$.

Recall that ${\rm U}_{m'}$'s AoI for TDMA can be approximated at high SNR as follows: $m'T+\frac{MT}{2}$. Therefore,    the difference between ${\rm U}_{m'}$'s AoI for TDMA and NOMA, denoted by $D_{m'} $,  is given by
\begin{align}
D_{m'} \triangleq&   \frac{1}{(1+\epsilon)}  \left[   mT      +    m'T \epsilon      \right] + \frac{1}{2}   MT   - \left(m'T+\frac{MT}{2}\right)\\\nonumber
=& \frac{1}{(1+\epsilon)}  \left[   mT      +    m'T \epsilon      \right]    - m'T 
\\\nonumber
=& \frac{  mT      +    m'T \epsilon  -m'T -m'T \epsilon }{(1+\epsilon)}  =  -\frac{  MT   }{2(1+\epsilon)} <0,
\end{align}
which means that ${\rm U}_{m'}$ experiences less AoI for NOMA than for TDMA. 

In order to find a high-SNR approximation for $\bar{\Delta}_{m}^N$,     the following approximations can be obtained first: $\tau\approx  \frac{\epsilon }{P} $, $p_{mm} \approx 1-  \frac{\epsilon}{P}$,    $
p_{m'm} \approx      \frac{\epsilon}{P}      \frac{1-\frac{\epsilon}{P} }{1+ \epsilon } 
+ \frac{\epsilon }{P}  $, and $
p_{0m} 
\approx  \frac{\epsilon}{P}   \frac{\epsilon}{1+ \epsilon } 
   $.  By applying these   approximations and also using   steps  similar to those used for  approximating $\bar{\Delta}_{m'}^N$,   ${\rm U}_{m}$'s AoI can be approximated at high SNR as follows:
\begin{align}
\bar{\Delta}_{m}^N =&\Delta_{m,0}  + \Delta(p_{0m}, p_{mm}, p_{m'm}) \approx mT+ \frac{MT}{2} ,
\end{align}
which is identical to the case  of   TDMA. Therefore, the following corollary can be obtained.


\begin{corollary}\label{corollary2}
For the case with the GAR model, at high SNR, i.e.,  $P=P^S\rightarrow \infty$,   ${\rm U}_{m'}$'s   average AoI for CR-NOMA is strictly smaller than   that for TDMA. ${\rm U}_{m}$'s   average AoIs for TDMA and CR-NOMA are identical   at high SNR. 
\end{corollary}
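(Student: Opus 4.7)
The plan is to exploit the closed-form expression of Lemma \ref{lemma2} together with the TDMA baseline in \eqref{tdma_aoi new}, and to extract the leading-order behavior of each quantity as $P=P^S\to\infty$. Because $\bar{\Delta}_{m'}^T$ and $\bar{\Delta}_m^T$ differ only in an additive access-delay term ($m'T$ versus $mT$), the corollary reduces to two separate asymptotic comparisons: one for $\mathrm{U}_{m'}$, where the saving is expected, and one for $\mathrm{U}_{m}$, where both schemes should coincide.

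For $\mathrm{U}_{m'}$, I would first Taylor-expand the six parameters governing that user, namely $p_{0m'}$, $p_{mm'}$, $p_{m'm'}$, in powers of $1/P$. The key simplifications I expect are $p_{0m'}=O(1/P)$, $p_{mm'}\to (1+\epsilon)^{-1}$, $p_{m'm'}\to \epsilon(1+\epsilon)^{-1}$, and crucially $p_{mm'}+p_{m'm'}\to 1$. I would then substitute these into the two summands of \eqref{lemma2main}: the access-delay-weighted term $\Delta_{m',0}$ and the interframe term $\Delta(p_{0m'},p_{mm'},p_{m'm'})$. After collecting terms and using the identity $m'=m+M/2$, the resulting high-SNR expression for $\bar{\Delta}_{m'}^N$ should collapse to $\frac{mT+m'T\epsilon}{1+\epsilon}+\frac{MT}{2}$. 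Subtracting from $\bar{\Delta}_{m'}^T\approx m'T+\frac{MT}{2}$ yields $D_{m'}=-\frac{MT}{2(1+\epsilon)}<0$, which establishes the strict inequality in the first part of the corollary.

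For $\mathrm{U}_m$, I would repeat the same asymptotic procedure for the parameters $p_{0m}, p_{mm}, p_{m'm}$, starting from the auxiliary quantity $\tau \approx \epsilon/P$. The expected behavior is $p_{mm}\to 1$ and $p_{0m},p_{m'm}=O(1/P)$, which reflects the fact that at high SNR $\mathrm{U}_m$'s first-slot (primary) transmission is almost always successful. Substituting these leading terms into \eqref{lemma2main} should make $\Delta_{m,0}$ collapse to $mT$ (the ordinary GAR access delay for $\mathrm{U}_m$) and $\Delta(p_{0m},p_{mm},p_{m'm})$ collapse to $\frac{MT}{2}$. Together they reproduce exactly $\bar{\Delta}_m^T$ in \eqref{tdma_aoi new} with access delay $mT$, proving the second part of the corollary.

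The main obstacle will be the bookkeeping in the first step: the expression for $\Delta_{m',0}$ couples four probabilities in a quotient with squared denominators, so truncating to the correct leading order requires care to avoid losing cancellations that ultimately produce the clean factor $-MT/[2(1+\epsilon)]$. In particular, the cross-term $p_{m'}p_{m'}(1-p_0)^2/(p_m+p_{m'})^2(1-p_0)$ inside $\Delta(\cdot)$ contributes at the same $O(1)$ order as the dominant $2(p_m+p_{m'})^2$ term and must be retained; dropping it prematurely would give a wrong constant. Once the expansions are organized by orders of $1/P$ and the substitution $m'-m=M/2$ is made at the end, the algebra collapses neatly and the sign of $D_{m'}$ follows directly.
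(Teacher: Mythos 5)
Your proposal follows essentially the same route as the paper: high-SNR Taylor expansion of the probabilities $p_{0k}$, $p_{mk}$, $p_{m'k}$ from Lemma \ref{lemma2}, substitution into the two summands $\Delta_{k,0}$ and $\Delta(\cdot)$ of \eqref{lemma2main}, use of $m'-m=M/2$ to obtain $D_{m'}=-\tfrac{MT}{2(1+\epsilon)}<0$, and the analogous expansion via $\tau\approx\epsilon/P$ showing $\bar{\Delta}_m^N\to mT+\tfrac{MT}{2}$. Your remark about retaining the $O(1)$ cross-term $p_{mm'}p_{m'm'}$ in $\Delta(\cdot)$ is exactly the cancellation the paper relies on, so the plan is correct and matches the paper's own argument.
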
 

{\it Remark 3:} For ${\rm U}_{m'}$, the performance gain of NOMA over TDMA at high SNR is due to the fact that the use of NOMA reduces the access delay, which is particularly important in the GAR case. Recall that in the GAR case, a user's update is generated at the beginning of the TDMA frame, which means that a user which is scheduled later in the frame suffers from a larger AoI. The use of NOMA  always  ensures that ${\rm U}_{m'}$ does not have to wait until the $m'$-th time slot, but can transmit  earlier, i.e.,   in the $m$-th time slot, which leads to the AoI reduction   stated in Corollary \ref{corollary2}.  We   note that  the use of NOMA does not improve ${\rm U}_{m}$'s access delay, which is the reason why ${\rm U}_{m}$'s AoIs for TDMA and NOMA are identical  at high SNR. 

{\it Remark 4:}  We further note that the AoI reduction due to the use of CR-NOMA is   important to improve   user fairness. For example,   two user which are scheduled in the first and the last time slots of a frame experience significantly different AoI for TDMA, but the use of CR-NOMA can reduce the difference between the users' AoI experience.  

\section{Numerical Studies} \label{section 5}
In this section, the benefits of NOMA transmission regarding the AoI are studied   by using computer simulation results, where TDMA is used as a benchmarking scheme. The time frame structure shown in Fig.  \ref{fig1} is used, where the users' channel gains in different time slots are assumed to be i.i.d. complex Gaussian distributed with zero mean and unit variance. For   illustration,   define $R\triangleq\frac{N}{T} $ and assume that $P=P^S$, where  $P$  is termed the transmit SNR in the simulation section because the noise power is assumed to be normalized.  Because the AoI depends on the data generation models,   the benefits  of NOMA for     AoI reduction are studied in two different subsections in the following.

 \begin{figure}[t] \vspace{-0em}
\begin{center}
\subfigure[$ {R}=0.5$ bit/s/Hz]{\label{fig4a}\includegraphics[width=0.45\textwidth]{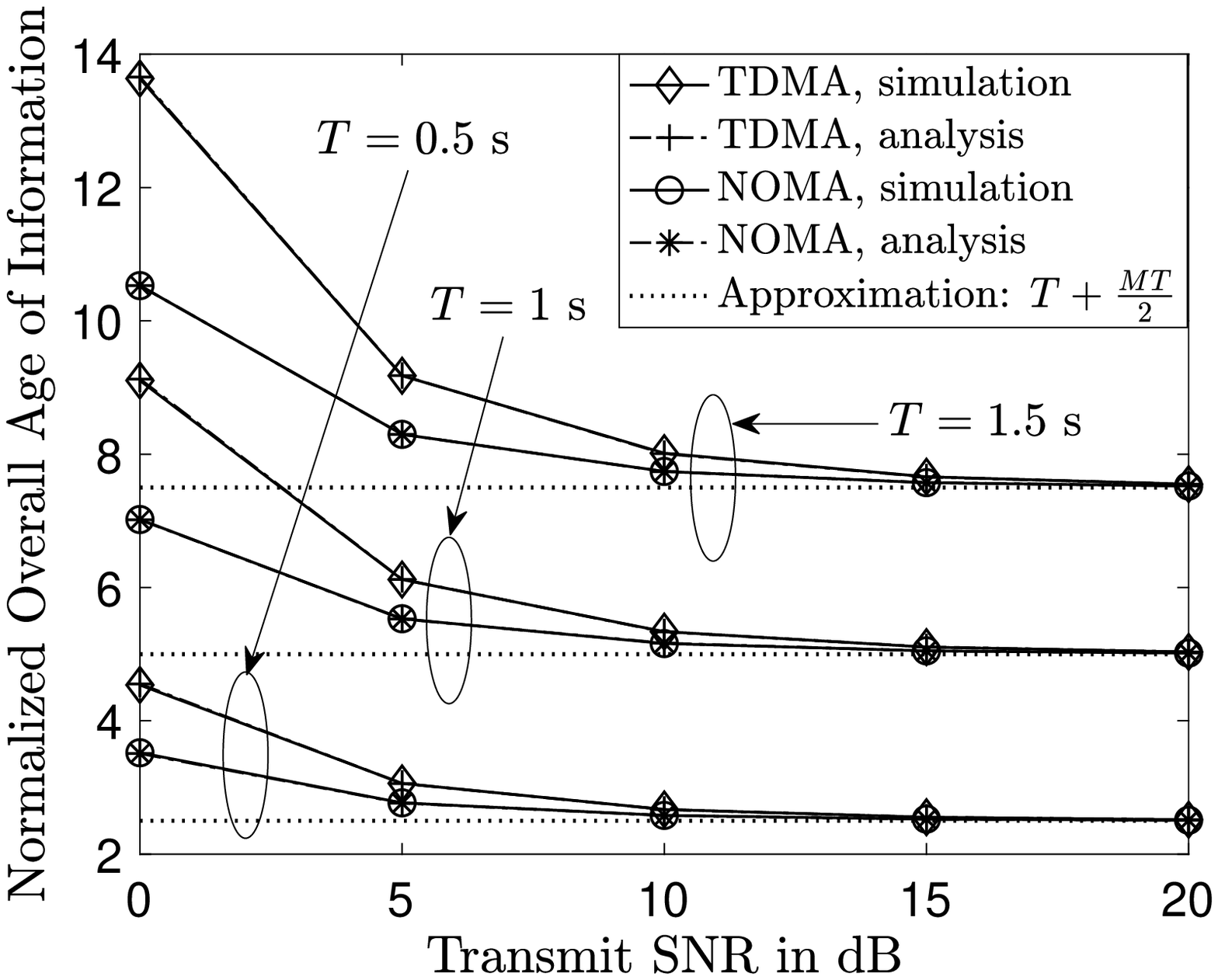}}\hspace{2em}
\subfigure[$ {R}=1$ bits/s/Hz]{\label{fig4b}\includegraphics[width=0.45\textwidth]{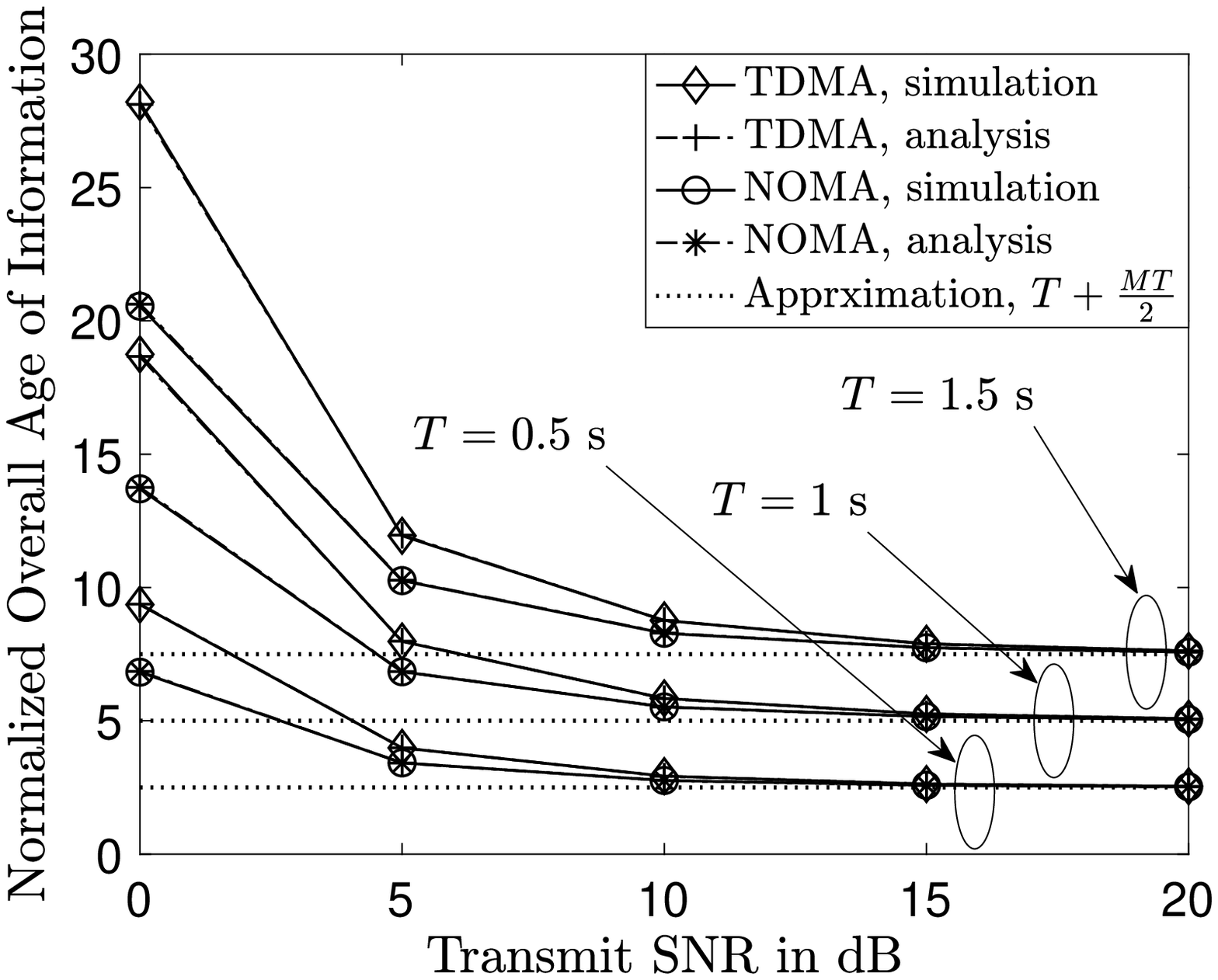}} \vspace{-1.5em}
\end{center}
\caption{ The impact of NOMA transmission on the AoI for  the generate-at-will model with  $M=8$.       \vspace{-1em} }\label{fig4}\vspace{-1em}
\end{figure}

\subsection{The Generate-at-Will Model} 
Recall that for the GAW model, each user generates a new update at the beginning of its transmit time slot. The impact of the different transmission protocols on the AoI is investigated in Fig. \ref{fig4} by assuming that there are $M=8$ users, i.e., there are $8$ time slots in each TDMA frame. As can be seen from the two subfigures in Fig. \ref{fig4},   the AoI achieved by the proposed NOMA transmission protocol can be significantly lower than that for   TDMA. For example, for     $R=1$ bits/s/Hz, $T=1.5$ s, and a transmit SNR of $0$ dB, the AoI realized with TDMA is around $28$, and the AoI achieved by CR-NOMA is just $20$, which means that the use of CR-NOMA yields more than a one-quarter   reduction   compared to TDMA. However, at high SNR, Fig. \ref{fig4} shows that TDMA and NOMA yield  the same   AoI, which confirms Corollary \ref{corollary1}.  We also note that both   subfigures in Fig. \ref{fig4}   verify the accuracy of the analytical results presented  in Lemma \ref{lemma1} as well as the approximation result  developed in \eqref{noma1}. 

Fig. \ref{fig4} also shows the impact of $R$ on the AoI performance of the considered transmission protocols. In particular, by comparing Fig. \ref{fig4a} and  Fig. \ref{fig4b}, one can observe that increasing $R$ increases the AoI for both   transmission protocols. Recall that increasing $R$ for given value   of $T$ means that there are more bits contained in each update, which makes transmission failures more likely and hence increases the   AoI.  We note that the performance gain of NOMA over TDMA becomes larger for larger  $R$. For example, for $T=1.5$ s and a transmit SNR of $0$ dB, the performance gain of NOMA over TDMA if $R=0.5$ bits/s/Hz is $3$, and this performance gain can be increased to almost $8$  if $R=1$ bits/s/Hz. The subfigures in Fig. \ref{fig4} also show that the AoI realized by the considered protocols     increases with  $T$, since increasing $T$ for given  $R$ means that there are more bits contained in each update.

    \begin{figure}[t]\centering \vspace{-0em}
    \epsfig{file=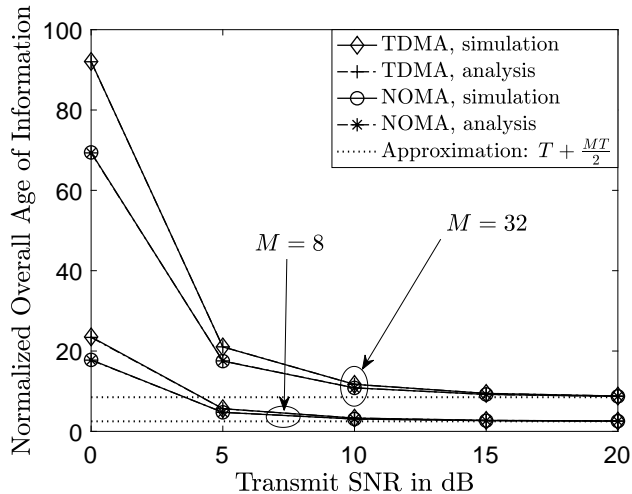, width=0.5\textwidth, clip=}\vspace{-0.5em}
\caption{The impact of the number of users, $M$, on the AoI, for  the generate-at-will model with  $R=1.5$ bits/s/Hz and  $T=0.5$~s.
  \vspace{-1em}    }\label{fig5}   \vspace{-0.1em} 
\end{figure}

In Fig. \ref{fig5}, the impact of the number of users, $M$, on the AoI achieved by the considered transmission protocols is studied. As can be seen from the figure, by increasing $M$, the AoI is increased for both   transmission protocols. This observation is expected since with more users in the network, each user has to wait for a longer period of time to be served. In addition, one can also observe that the performance gain of the proposed NOMA protocol over TDMA  increases as  the number of users, $M$, increases. For example, the performance gap between the two protocols is $5$ for   $M=8$, and   increases to $25$ for  $M=32$. This observation means that      the proposed NOMA protocol is particularly useful for reducing  the AoI of   networks with massive connectivity, which is a key use case of the 6G system. 

 \begin{figure}[t] \vspace{-0em}
\begin{center}
\subfigure[${\rm U}_m$'s AoI]{\label{fig6a}\includegraphics[width=0.45\textwidth]{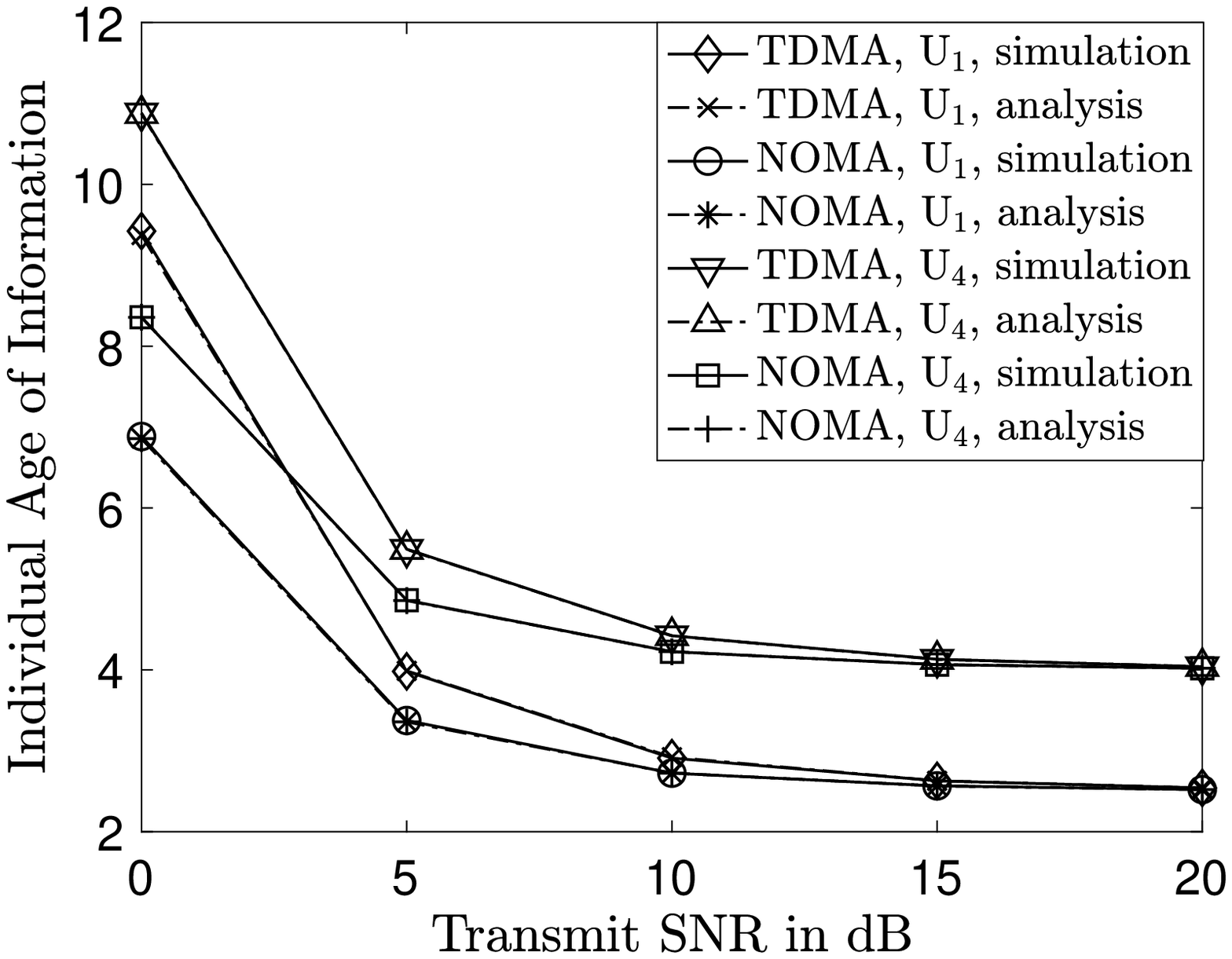}}\hspace{2em}
\subfigure[${\rm U}_{m'}$'s AoI]{\label{fig6b}\includegraphics[width=0.45\textwidth]{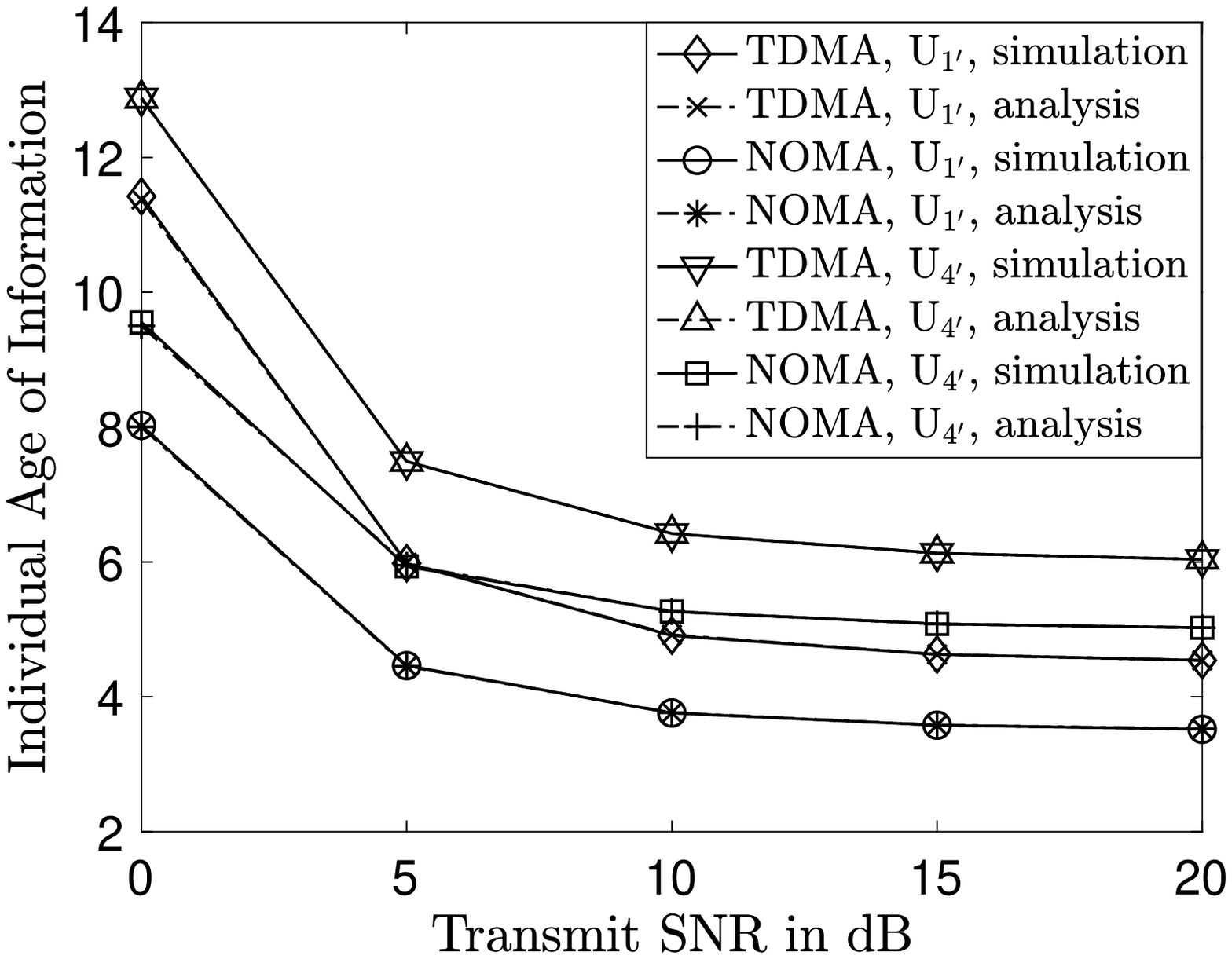}} \vspace{-1.5em}
\end{center}
\caption{ The impact of NOMA transmission on the individual AoI for the generate-at-request model with    $M=8$, $R=1$ bit/s/Hz and $T=0.5$ s.     \vspace{-1em} }\label{fig6}\vspace{-1em}
\end{figure}

\subsection{The Generate-at-Request Model}
Recall that for the GAR model, each user generates its update at the beginning of each TDMA frame, instead of each time slot as in the GAW case. 
 In Fig. \ref{fig6}, the impact of the NOMA transmission protocol on the users' individual AoI is studied for the GAR model. In particular, Fig. \ref{fig6a} focuses on  ${\rm U}_m$'s  individual   AoI achieved by the two considered  transmission protocols,   $1\leq m \leq \frac{M}{2}$. Unlike for the GAW model, different users experience different AoIs for the GAR model, i.e., ${\rm U}_m$'s AoI is larger than that of ${\rm U}_i$'s, $m>i$.  This observation is expected since  ${\rm U}_m$'s instantaneous AoI can   drop  to $mT$ at most, whereas ${\rm U}_i$'s instantaneous AoI can   drop  to $iT$. Fig. \ref{fig6a}  also shows that for ${\rm U}_m$, the performance gain of NOMA over TDMA is similar to that for the GRW case, e.g., the use of NOMA yields a significant performance gain at low SNR but achieves  the same AoI as TDMA at high SNR. This performance gain at low SNR is due to the fact that ${\rm U}_m$ has a second chance for transmission in each frame, whereas for TDMA, ${\rm U}_m$ has to rely on a single time slot for its updates. 

Fig. \ref{fig6b} focuses on  ${\rm U}_{m'}$'s individual   AoI achieved for the two transmission protocols.  As can be seen from the figure, NOMA   outperforms TDMA in all SNR regimes. The reason for this superior performance   can be explained as follows.  Recall that for TDMA, ${\rm U}_{m'}$ has to rely on the $m'$-th time slot only for sending its update to the base station. The use of the proposed NOMA protocol has  two advantages for reducing the AoI. One is that  the use of NOMA offers the user two chances to transmit  in each TDMA time frame. The other is that the proposed NOMA protocol can schedule ${\rm U}_{m'}$ to transmit earlier, i.e.,  completing its update in the $m$-th time slot, instead of waiting for the $m'$-th time slot as for TDMA. The latter is crucial    for NOMA to outperform   TDMA in the high SNR regime, as indicated by  Corollary \ref{corollary2}. Furthermore, we note that the subfigures of Fig. \ref{fig6} demonstrate the accuracy of the developed analytical results shown in Lemma \ref{lemma2}.

    \begin{figure}[t]\centering \vspace{-0em}
    \epsfig{file=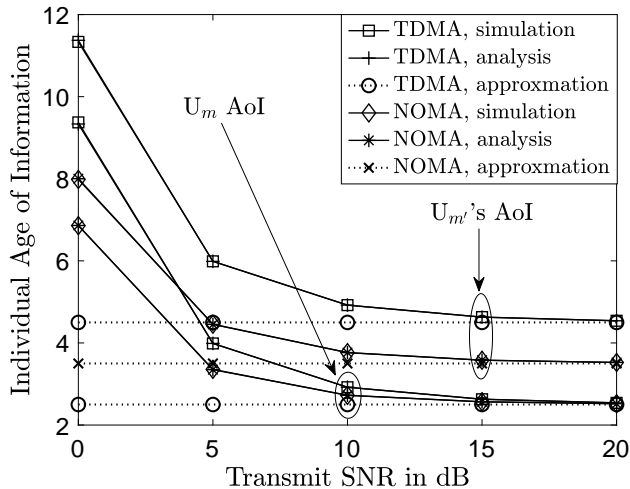, width=0.5\textwidth, clip=}\vspace{-0.5em}
\caption{Comparison of the AoI experienced by  ${\rm U}_m$  and  ${\rm U}_{m'}$,  where the generate-at-request model is used, $m=1$,   $M=8$, $R=1$ bit/s/Hz and $T=0.5$ s.  
  \vspace{-1em}    }\label{fig7x}   \vspace{-0.1em} 
\end{figure}

In Fig. \ref{fig7x}, the individual  AoI experienced by  ${\rm U}_m$  and  ${\rm U}_{m'}$ is compared.  As can be seen from the figure, for   the TDMA     transmission protocol, ${\rm U}_{m'}$'s AoI is much larger than that of ${\rm U}_{m}$, which is due to the fact that ${\rm U}_{m'}$ has to wait for the $m'$-th time slot to deliver its update and hence experiences large access delays. By using the proposed NOMA protocol, the difference between the two users' AoI can be reduced significantly. For example, at high SNR, the difference between the two users' AoI is $2$ with TDMA, and can be halved by applying NOMA. Therefore, the use of NOMA can effectively reduce the difference between the users' AoI  and hence  improve  user fairness, as discussed in Remark 4. We also  note that Fig. \ref{fig7x}   demonstrates the accuracy of the high SNR approximations   developed in Section  \ref{subsection IV.B}. 

 \begin{figure}[t] \vspace{-0em}
\begin{center}
\subfigure[$R=0.5$ bits/s/Hz]{\label{fig7a}\includegraphics[width=0.45\textwidth]{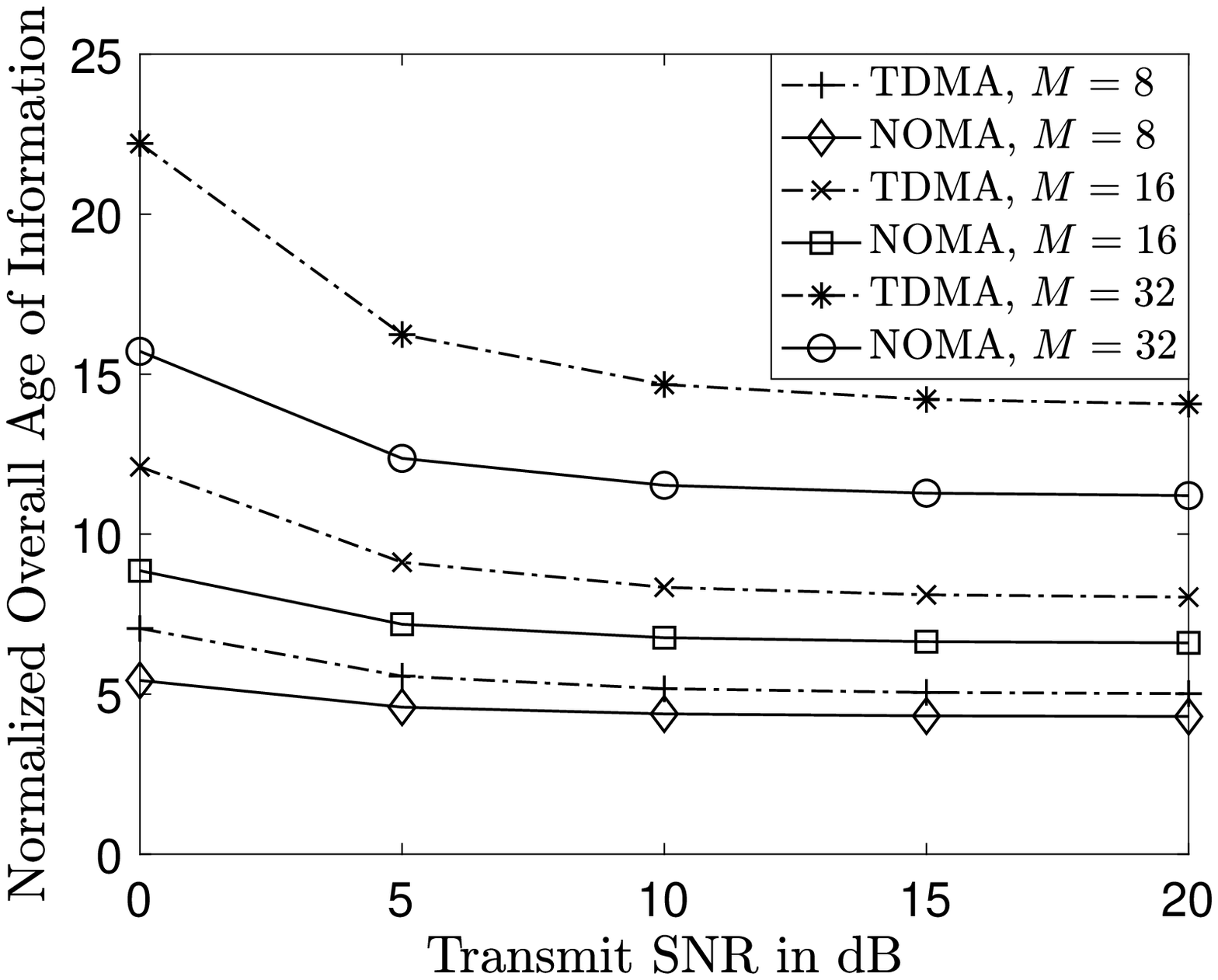}}\hspace{2em}
\subfigure[$ {R}=1.5$ bits/s/Hz]{\label{fig7b}\includegraphics[width=0.45\textwidth]{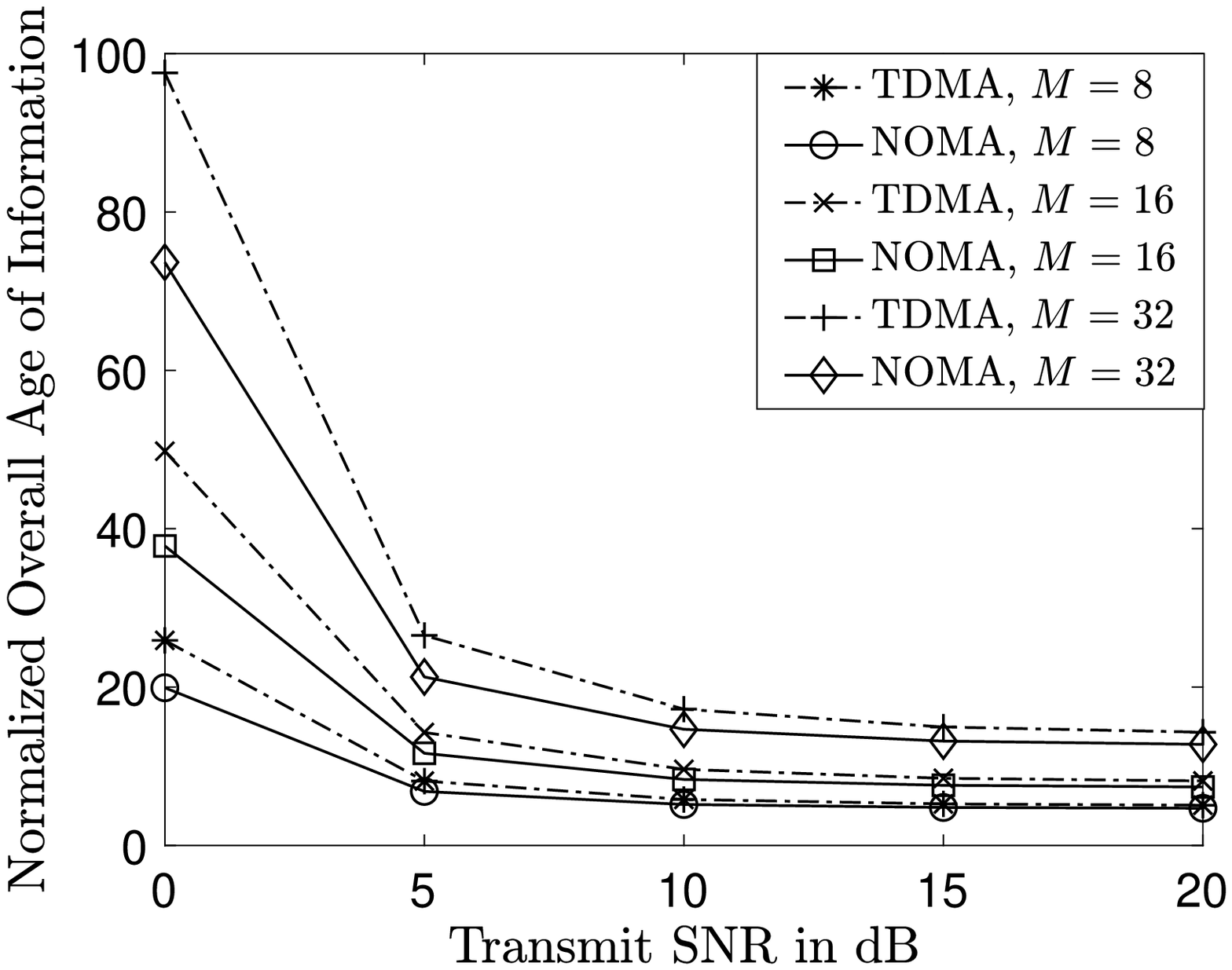}} \vspace{-1.5em}
\end{center}
\caption{The impact of NOMA transmission on the overall AoI, where the generate-at-request model is used and     $T=0.5$ s.    \vspace{-1em} }\label{fig7}\vspace{-1em}
\end{figure}

In Fig. \ref{fig7}, the normalized overall AoI is used as the metric to study the performance  of the proposed NOMA transmission protocol with the GAR model. 
We note that    for   the  GAW model,  the proposed   NOMA protocol has the limitation that  it can outperform TDMA in the low SNR regime only, as shown in Figs. \ref{fig4} and \ref{fig5}.  Compared to  Figs. \ref{fig4} and \ref{fig5},  Fig. \ref{fig7} shows that the proposed NOMA protocol can always outperform TDMA and realize  a smaller  overall AoI in all   SNR regimes. The performance gain of NOMA over TDMA is significant for  small $R$, and is reduced by increasing $R$, as shown in the two subfigures of Fig. \ref{fig7}. Furthermore, Fig.~\ref{fig7} also demonstrates that the performance gain of NOMA over TDMA increases as the number of the users, $M$, increases, which is consistent with the observations related to Fig. \ref{fig5}.  

\section{Conclusions} \label{section 6}
In this paper, NOMA has been used as an add-on to  reduce the AoI of a legacy TDMA network. By using the key features of the two considered    data generation models, namely GAW and GAR, two CR-NOMA transmission protocols have been developed to reduce the AoI of the network.   Closed-form expressions of  the AoI achieved by the proposed NOMA protocols have been derived, and an asymptotic analysis  has been carried out to show that the use of NOMA can reduce the AoI due to the following two reasons. First,  the use of NOMA provides users more chances to transmit, which ensures that the users can  update their base station more frequently.  Second,   the use of NOMA allows  the users to transmit earlier than in the TDMA case, and hence, improve the freshness of the data available at the base station.
In this paper, it was assumed that the signals of the secondary users   are decoded first at the base station before decoding the primary users' signals. The use of more dynamic SIC decoding orders can potentially lead to a larger  AoI reduction, which is  an important direction for future research. In addition, in this paper, the users' CSI is assumed to be perfectly known  for the implementation of NOMA. An important future research direction is to study the impact of imperfect CSI  on the AoI reduction.

\appendices
\section{Proof for Lemma \ref{lemma1}}\label{proof1}
It is straightforward to verify that all the users experience the same AoI with CR-NOMA for the GAW case, and therefore,   ${\rm U}_m$'s AoI  is considered  in the remainder of this proof.  
Unlike the case of TDMA, the duration between the beginning and the end of the $j$-th successful update is not always a multiple   of $MT$, since each user has two chances to transmit  in each frame.  
For illustration, assume that ${\rm U}_m$'s  $(j-1)$-th successful  update finishes  in the $i$-th frame. Because ${\rm U}_m$ has two chances to transmit in each frame, the following two events are defined based on which of the two time slots  is used:
\begin{align}\nonumber
E_{jm} =&  \left\{\text{the $(j-1)$-th successful  update finishes at the end  } \right.\\\nonumber & \text{ of the $m$-th time slot of a frame, e.g., at $t_i^{m+1}$}  \},\\\nonumber
E_{jm'} =&   \left\{\text{the $(j-1)$-th successful  update finishes at the end } \right.\\ \label{Ejm} &  \text{ of the $m'$-th time slot of a frame, e.g., at $t_i^{m'+1}$}  \}.
\end{align}

 Denote the time interval  between   the  $(j-1)$-th and the $j$-th successful updates by $y_j$, $j\geq 1$, whose value can be obtained considering  by the following four cases:
\begin{align} \label{atwill yj}
y_j = \left\{\begin{array}{ll}\hspace{-0.5em}
x_jMT,&\hspace{-0.5em}\text{from $t_i^{m+1}$    to $x_jMT +t_i^{m+1} $ } 
\\\hspace{-0.5em}
x_jMT,&\hspace{-0.5em}\text{from  $t_i^{m'+1}$ to $x_jMT + t_i^{m'+1}$ } 
\\
\hspace{-0.5em}x_jMT+\frac{M}{2}T,&\hspace{-0.5em}\text{from $t_i^{m+1}$ to  $x_jMT +\frac{M}{2}T+t_i^{m+1}$ } 
\\
\hspace{-0.5em}x_jMT-\frac{M}{2}T,&\hspace{-0.5em}\text{from  $t_i^{m'+1}$ to  $x_jMT -\frac{M}{2}T+t_i^{m'+1}$ } 
\end{array}\right.
\hspace{-0.5em},
\end{align}
where   $x_j$ is defined as the number of frames between the $(j-1)$-th and the $j$-th successful updates, $x_j\in\mathbb{Z}$, and $\mathbb{Z}$ denotes the integer set. 

Eq. \eqref{atwill yj} shows  that   $y_j=x_jMT$ is caused by  two different events. One is that, conditioned on $E_{jm}$,  the user   fails to update the base station during the first $(x_j-1)$ frames, but successfully sends an update in the $m$-th time slot of the $(x_j+i)$-th frame, where the user's $(j-1)$-th successful update is assumed to occur   in the $i$-th frame without loss of generality. The other is that, conditioned on $E_{jm'}$, the user   fails to update the base station during the first $(x_j-1)$ frames, but successfully sends an update in the $m'$-th time slot of the $(x_j+i)$-th frame. 
  $y_j=x_jMT+\frac{M}{2}T$ corresponds   to the event that, conditioned on $E_{jm}$, the user fails to update the base station until  the $m'$-th time slot of the $(x_j+i)$-th frame. $y_j=x_jMT-\frac{M}{2}T$ corresponds to   the event that, conditioned on $E_{jm'}$,  the user fails to update the base station until  the $m$-th time slot of the $(x_j+i)$-th frame. 
Following the definition of $y_j$ in \eqref{atwill yj},    the  following four conditional probabilities can be defined: $p_{j1}=\mathbb{P}(y_j=x_jMT|E_{jm})$,  $p_{j2}=\mathbb{P}(y_j=x_jMT+\frac{M}{2}T|E_{jm})$,  $p_{j3}=\mathbb{P}(y_j=x_jMT|E_{jm'})$, and    $p_{j4}=\mathbb{P}(y_j=x_jMT-\frac{M}{2}T|E_{jm'})$, which will be evaluated later.

 ${\rm U}_m$'s average AoI achieved by NOMA can be expressed as follows: 
\begin{align}
\bar{\Delta}_m^N =& \underset{J\rightarrow \infty}{\lim} \frac{\sum^{J}_{j=1} Q_j }{\sum^{J}_{j=1}y_j}  
\\\nonumber =& \underset{J\rightarrow \infty}{\lim} \frac{\sum^{J}_{j=1} Ty_j+\frac{1}{2}y_j^2 }{\sum^{J}_{j=1}y_j} =T+ \frac{1}{2} \frac{  \mathcal{E}\{Y^2\}}{ \mathcal{E}\{Y\}},
\end{align}
 where $\mathcal{E}\{Y\}=\underset{J\rightarrow \infty}{\lim}\frac{1}{J}\sum^{J}_{j=1}y_j$ and $\mathcal{E}\{Y^2\}=\underset{J\rightarrow \infty}{\lim}\frac{1}{J}\sum^{J}_{j=1}y_j^2$. The remainder of the proof is to evaluate $\mathcal{E}\{Y\}$ and $\mathcal{E}\{Y^2\}$. 

Define $p_0$ as the probability of the event that ${\rm U}_m$ fails to deliver an update in both of the two   time slots in  one frame.  Note that, in  the two times slots, ${\rm U}_m$ assumes different roles for transmission, i.e., ${\rm U}_m$ is the primary user in the $m$-th time slot and the secondary user in the $m'$-th time slot. By using the date rate constraint in \eqref{crnoma} and  the assumption that    the users' channel gains are i.i.d. complex Gaussian distributed, the probability, $p_0$, can  be obtained as follows: 
\begin{align}\nonumber
p_0 =  & \mathbb{P}\left( \log\left( 1+ P|h_m^{i,m}|^2 \right)\leq \frac{N}{T},   \log\left( 1+\frac{P^S|h_m^{i,m'}|^2}{P|h_{m'}^{i,m'}|^2+1}\right)\leq \frac{N}{T} \right) = \left(1-e^{- \frac{\epsilon}{P^S}}\right)  \left(1- \frac{e^{-\frac{\epsilon}{P^S} }}{1+\frac{P\epsilon}{P^S}}\right) ,
\end{align}
where  the channel gains in the $i$-th frame are used for illustrative purposes.
 
Denote the probability for the user to successfully deliver its update in the $m$-th time slot by $p_m $, which can be expressed as follows:
\begin{align}
p_m =  & \mathbb{P}\left( \log\left( 1+ P^S|h_m^{i,m}|^2 \right)\geq \frac{N}{T}  \right)  = e^{-\frac{\epsilon}{P^S}}.
\end{align}

 Denote the probability for the event that the user fails to    deliver its update in the $m$-th time slot   but successfully delivers a new update in the $m'$-th time slot by $p_{m'} $. This probability can be expressed as follows:
 \begin{align}\nonumber
p_{m'} =  &  \mathbb{P}\left( \log\left( 1+ P|h_m^{i,m}|^2 \right)\leq \frac{N}{T},   \log\left( 1+\frac{P^S|h_m^{i,m'}|^2}{P|h_{m'}^{i,m'}|^2+1}\right)\geq \frac{N}{T} \right) \\\nonumber =&\left(1-e^{- \frac{\epsilon}{P^S}}\right)   e^{-\frac{\epsilon}{P^S} }\frac{1}{1+\frac{P\epsilon}{P^S}}   .
\end{align}

Therefore, $\mathcal{E}\{Y\} $ can be obtained as follows:
\begin{align}
\mathcal{E}\{Y\} =& \underset{x_j\in\mathbb{Z}}{\sum}x_jMT  \left(p_{j1}\mathbb{P}(E_{jm}) +p_{j3}\mathbb{P}(E_{jm'})\right) +\underset{x_j\in\mathbb{Z}}{\sum} \left(x_jMT+\frac{M}{2}T\right)  p_{j2}\mathbb{P}(E_{jm})\\\nonumber &
+ \underset{x_j\in\mathbb{Z}}{\sum} \left(x_jMT-\frac{M}{2}T\right)  p_{j4}\mathbb{P}(E_{jm'})\\\nonumber
 =& \sum^{\infty}_{j=1} jMT\left( p_0^{j-1}p_m^2+p_0^{j-1}p_{m'}^2\right) + \sum^{\infty}_{j=1} \left(jMT+\frac{M}{2}T\right)p_0^{j-1}p_{m'}p_m\\\nonumber &+ \sum^{\infty}_{j=1} \left(jMT-\frac{M}{2}T\right)p_0^{j-1}p_{m}p_{m'} ,
\end{align}
where the last step is obtained by using the following facts: $\mathbb{P}(E_{jm})=p_{m}$, $\mathbb{P}(E_{jm'})=p_{m'}$, $p_{j1}=p_0^{j-1}p_m$, $p_{j2}=p_0^{j-1}p_{m'}$,  $p_{j3}=p_0^{j-1}p_{m'}$, and $p_{j4}=p_0^{j-1}p_{m}$.
With some straightforward  algebraic manipulations, the expression of $\mathcal{E}\{Y\}  $ can be simplified as follows:
\begin{align}
\mathcal{E}\{Y\}  
 =& MT(p_m^2+p_{m'}^2+2p_{m}p_{m'})p_0^{-1}\sum^{\infty}_{j=1} jp_0^{j} 
  \\\nonumber 
 =& MT(p_m+p_{m'})^2  \frac{1}{(1-p_0)^2}  ,
\end{align}
where the last step follows from the following infinite sum of series:  
\begin{align}\label{sum1}
\sum^{\infty}_{j=1} jx^{j}&=x\sum^{\infty}_{j=1} \frac{d}{dx}x^{j}=x \frac{d}{dx}\sum^{\infty}_{j=1}x^{j}= \frac{x}{(1-x)^2}.
\end{align}
for $0<x<1$. 

On the other hand,   $\mathcal{E}\{Y^2\} $ can be obtained as follows:
\begin{align}
\mathcal{E}\{Y^2\} =& \sum^{\infty}_{j=1} j^2M^2T^2 \left(p_{j1}\mathbb{P}(E_{jm}) +p_{j3}\mathbb{P}(E_{jm'})\right)  + \sum^{\infty}_{j=1} \left(jMT+\frac{M}{2}T\right) ^2 p_{j2}\mathbb{P}(E_{jm}) 
\\\nonumber &+ \sum^{\infty}_{j=1} \left(jMT-\frac{M}{2}T\right) ^2 p_{j4}\mathbb{P}(E_{jm'}) 
\\\nonumber
=& \sum^{\infty}_{j=1} j^2M^2T^2\left( p_0^{j-1}p_m^2+p_0^{j-1}p_{m'}^2\right) + \sum^{\infty}_{j=1} \left(jMT+\frac{M}{2}T\right) ^2 p_0^{j-1}p_{m'}p_m
\\\nonumber &+ \sum^{\infty}_{j=1} \left(jMT-\frac{M}{2}T\right) ^2 p_0^{j-1}p_{m}p_{m'} .
\end{align}
With some straightforward  algebraic manipulations, the expression of $\mathcal{E}\{Y^2\}  $ can be simplified as follows:
\begin{align}
\mathcal{E}\{Y^2\} =&M^2T^2(p_m+p_{m'})^2p_0^{-1} \sum^{\infty}_{j=1} j^2p_0^{j} +  \frac{M^2}{2}T^2p_mp_{m'}p_0^{-1}\sum^{\infty}_{j=1} p_0^{j} 
\\\nonumber 
=&M^2T^2(p_m+p_{m'})^2 \frac{(1+p_0)}{(1-p_0)^3} +  \frac{M^2}{2}T^2\frac{p_mp_{m'}}{1-p_0} ,
\end{align}
where the last step follows from the following infinite sum of series:  
\begin{align}\label{sum2}
\sum^{\infty}_{j=1} j^2x^{j} =& x\sum^{\infty}_{j=1} j  \frac{d}{dx}x^{j} = x \frac{d}{dx}\sum^{\infty}_{j=1} j x^{j}   =\frac{x(1+x)}{(1-x)^3}.
\end{align}

Therefore, ${\rm U}_m$'s average AoI can be calculated as follows:
\begin{align}
\bar{\Delta}_m^N  =&T+ \frac{1}{2} \frac{  \mathcal{E}\{Y^2\}}{ \mathcal{E}\{Y\}} \\\nonumber 
=&T+ \frac{1}{2} \frac{ M^2T^2(p_m+p_{m'})^2 \frac{(1+p_0)}{(1-p_0)^3}+  \frac{M^2}{2}T^2p_mp_{m'}\frac{1}{1-p_0} }{ MT(p_m+p_{m'})^2  \frac{1}{(1-p_0)^2} }.
\end{align}
By using the fact that all the users experiences the same AoI and with some straightforward algebraic manipulations, the lemma is proved.

 \section{Proof for Lemma \ref{lemma2}}\label{proof3}
For the GAR model, different users experience different AoI.s In the proof, the common steps for the analysis of the users' AoIs are provided first and then the specific results for the  users'  individual AoIs are presented.  

\subsection{Generic Expression for the AoI, $\bar{\Delta}_k^N $, $k\in\{m,m'\}$}
To facilitate the analysis of the AoI,  the events in \eqref{Ejm} are first modified as follows: 
\begin{align}\nonumber
E_{jm}^k =&  \left\{\text{${\rm U}_k$'s   $(j-1)$-th successful  update finishes at the} \right.\\\nonumber & \text{end of the $m$-th time slot of a frame, e.g., at $t_i^{m+1}$}  \},\\\nonumber
E_{jm'}^k =&   \left\{\text{${\rm U}_k$'s   $(j-1)$-th successful  update finishes at the} \right.\\ \label{Ejmx} &  \text{end of the $m'$-th time slot of a frame, e.g., at $t_i^{m'+1}$}  \}.
\end{align}
where $k\in\{m,m'\}$.

Denote by $y_{jk}$   the time interval between ${\rm U}_k$'s $(j-1)$-th and   $j$-th successful updates, $k\in\{m,m'\}$. Depending on which of the two events, $E_{jm}^k$ and $E_{jm'}^k$, happens, the value of $y_{jk}$ will be different. Furthermore, the height of the rectangle in the shaded region shown in Fig. \ref{fig3}   also depends on the two events, $E_{jm}^k$ and $E_{jm'}^k$.     For example,   ${\rm U}_m$'s instantaneous AoI is reset to $mT$ if the user's $(j-1)$-th successful update finishes in the $m$-th time slot of the last frame, i.e.,   $E^m_{jm}$ occures. If $E^m_{jm'}$ occures, i.e., the user's $(j-1)$-th successful update finishes in the $m'$-th time slot of the last frame,  ${\rm U}_m$'s  AoI is reset to $m'T$.  ${\rm U}_{m'}$'s instantaneous AoI is changed similar to that of ${\rm U}_m$'s  AoI.  Therefore, the average AoI achieved by CR-NOMA can be expressed as follows:   
\begin{align}
\bar{\Delta}_k^N =& \underset{J\rightarrow \infty}{\lim} \frac{\sum^{J}_{j=1} Q_j^k }{\sum^{J}_{j=1}y_{jk}}  \\\nonumber
= &\underset{J\rightarrow \infty}{\lim} \frac{\sum^{J}_{j=1} \left(\mathbf{1}_{E^k_{jm}}mT + \mathbf{1}_{E^k_{jm'}}m'T \right)y_{jk}+\frac{1}{2}y_{jk}^2 }{\sum^{J}_{j=1}y_{jk}} 
,
\end{align}
where $k\in\{m, m'\}$,   $Q_j^K$ denotes the area of the shaded shape   shown in Fig. \ref{fig3}, and $ \mathbf{1}_{E}$ is an indicator  function, i.e.,  $\mathbf{1}_{E}=1$ if   event $E$ happens, otherwise $\mathbf{1}_{E}=0$. By using   steps similar to those in the proof of Lemma \ref{lemma1},  $\bar{\Delta}_k^N $ can be expressed as follows:
\begin{align}\label{delta request}
\bar{\Delta}_k^N  =&\Delta_{k,0}+ \frac{1}{2} \frac{  \mathcal{E}\{Y_k^2\}}{ \mathcal{E}\{Y_k\}},
\end{align}
where  $\mathcal{E}\{Y_k\}=\underset{J\rightarrow \infty}{\lim}\frac{1}{J}\sum^{J}_{j=1}y_{jk}$,  $\mathcal{E}\{Y_k^2\}=\underset{J\rightarrow \infty}{\lim}\frac{1}{J}\sum^{J}_{j=1}y_{jk}^2$, and $\Delta_{k,0}$ is defined   as follows: 
\begin{align}
\Delta_{k,0}\triangleq &\underset{J\rightarrow \infty}{\lim} \frac{\sum^{J}_{j=1} \left(\mathbf{1}_{E^k_{jm}}mT + \mathbf{1}_{E^k_{jm'}}m'T \right)y_{jk}  }{\sum^{J}_{j=1}y_{jk}} .
\end{align}
  To facilitate the performance analysis, denote by $ {p}_{0k}$     the probability of the event that ${\rm U}_k$ fails to deliver an update in both of the two time slots of one frame,   by $p_{mk}$      the probability of the event that ${\rm U}_k$ successfully delivers its update in the $m$-th time slot of a frame, and  by  $p_{m'k}$     the probability of the event that ${\rm U}_k$ fails  in the $m$-th time slot but successfully delivers its update in the $m'$-th time slot of the same frame, $k\in\{m,m'\}$. 
  
  By using the same steps in Appendix  \ref{proof1}, it is straightforward to show that  the second term in \eqref{delta request} is simply $ \Delta(p_{0k}, p_{mk}, p_{m'k})$. Therefore, in the remainder of the proof, we  focus on the evaluation of   the first term in \eqref{delta request},  $\Delta_{k,0}  $, as well as the probabilities, $p_{0k}$, $p_{mk}$ and $p_{m'k}$, as shown in the following sections.

\subsection{Evaluation of  $\Delta_{k,0}  $, $k\in\{m,m'\}$}

By using the expectation   $\mathcal{E}\{Y_k\}$,  $\Delta_{k,0}$ in \eqref{delta request} can be expressed as follows:
\begin{align}
\Delta_{k,0}=&\underset{J\rightarrow \infty}{\lim}  \frac{\frac{\sum^{J}_{j=1}\mathbf{1}_{E^k_{jm}}mTy_{jk} }{J}  + \frac{ \sum^{J}_{j=1}\mathbf{1}_{E^k_{jm'}}m'T y_{jk}}{J}  }{\frac{\sum^{J}_{j=1}y_{jk}}{J}} \\\nonumber 
=&\frac{mT}{{\mathcal{E}\{Y_k\}} }
\underset{J\rightarrow \infty}{\lim}  \frac{\sum^{J}_{j=1}\mathbf{1}_{E^k_{jm}}}{J}\frac{\sum^{J}_{j=1}\mathbf{1}_{E^k_{jm}}y_{jk}}{\sum^{J}_{j=1}\mathbf{1}_{E^k_{jm}}}  +\frac{m'T}{{\mathcal{E}\{Y_k\}} }
\underset{J\rightarrow \infty}{\lim}  \frac{\sum^{J}_{j=1}\mathbf{1}_{E^k_{jm'}}}{J} \frac{ \sum^{J}_{j=1}\mathbf{1}_{E^k_{jm'}} y_{jk}}{\sum^{J}_{j=1}\mathbf{1}_{E^k_{jm'}}}  .
\end{align}

Define the following two conditional expectations: $\mathcal{E}\{Y_k|E^k_{jm}\}=\underset{J\rightarrow \infty}{\lim} \frac{\sum^{J}_{j=1}\mathbf{1}_{E^k_{jm}}mTy_{jk}}{\sum^{J}_{j=1}\mathbf{1}_{E^k_{jm}}} $  and $\mathcal{E}\{Y_k|E^k_{jm'}\}=\underset{J\rightarrow \infty}{\lim} \frac{ \sum^{J}_{j=1}\mathbf{1}_{E^k_{jm'}}m'T y_{jk}}{\sum^{J}_{j=1}\mathbf{1}_{E^k_{jm'}}}$, which can be used to simplify the expression for $\Delta_{k,0}  $ as follows:
\begin{align}
\Delta_{k,0}  
=&
 \frac{mT\mathbb{P}(E^k_{jm})\mathcal{E}\{Y_k|E^k_{jm}\}   + m'T\mathbb{P}(E^k_{jm'})\mathcal{E}\{Y_k|E^k_{jm'}\}  }{\mathcal{E}\{Y_k\}} .
\end{align}

For illustrative purposes, assume that  ${\rm U}_k$'s $(j-1)$-th successful update happens in the $i$-th frame.   Therefore,  $ E^k_{jm}$  means that ${\rm U}_k$'s $(j-1)$-th successful update happens in the $m$-th time slot of the $i$-th frame, and hence  $y_{jk}$ can have the following two forms: 
\begin{align} \label{atwill yj2}
y_{jk} = \left\{\begin{array}{ll} 
x_{jk}MT,& \text{from  $t_i^{m+1}$ to $x_{jk}MT +t_i^{m+1} $ } 
\\ 
x_{jk}MT+\frac{M}{2}T,& \text{from  $t_i^{m+1}$ to $x_{jk}MT +\frac{M}{2}T+t_i^{m+1}$ } 
\end{array}\right.\hspace{-1em},
\end{align}
where $x_{jk}$ denotes the number of frames between ${\rm U}_k$'s $(j-1)$-th and $j$-th successful updates. 
 Therefore, 
$\mathcal{E}\{Y_k|E^k_{jm}\} $ can be obtained as follows:
\begin{align}
\mathcal{E}\{Y_k|E^k_{jm}\} =& \underset{x_{jk}\in\mathbb{Z}}{\sum}x_{jk}MT  p^k_{j1}  +\underset{x_{jk}\in\mathbb{Z}}{\sum} \left(x_{jk}MT+\frac{M}{2}T\right)  p^k_{j2},
\end{align}
where $p^k_{j1}=\mathbb{P}(y_{jk}=x_{jk}MT|E^k_{jm})$ and  $p^k_{j2}=\mathbb{P}(y_{jk}=x_{jk}MT+\frac{M}{2}T|E^k_{jm})$, for $k\in \{m, m'\}$. 

 By using the fact that the users' channel gains in different time slots are i.i.d.,  the conditional expectation, 
$\mathcal{E}\{Y_k|E^k_{jm}\} $, can be rewritten  as follows:
\begin{align}
\mathcal{E}\{Y_k|E^k_{jm}\} \overset{(1)}{=}&  \sum^{\infty}_{j=1} jMT p_{0k}^{j-1}p_{mk} + \sum^{\infty}_{j=1} \left(jMT+\frac{M}{2}T\right)p_{0k}^{j-1}p_{m'k}  
 \\\nonumber 
 \overset{(2)}{=}&    \frac{MT (p_{mk} +p_{m'k} )}{(1- p_{0k})^2} + \frac{1}{2} \frac{MTp_{m'k}}{1- p_{0k}},
\end{align}
where the first step follows from 
 $p_{j1}^k=p_{0k}^{j-1}p_{mk}$ and  $p_{j2}^k=p_{0k}^{j-1}p_{m'k}$,   and 
 the last step follows by using \eqref{sum1} and \eqref{sum2}.

On the other hand, conditioned on $ E^k_{jm'}$, $y_{jk} $ can have the following two forms:  
\begin{align} \label{atwill yj3}
y_{jk} = \left\{\begin{array}{ll}\hspace{-0.5em}
x_{jk}MT,& \text{from   $t_i^{m'+1}$   to $x_{jk}MT + t_i^{m'+1}$ }  
\\\hspace{-0.5em}
x_{jk}MT-\frac{M}{2}T,& \text{from  $t_i^{m'+1}$   to $x_{jk}MT -\frac{M}{2}T+t_i^{m'+1}$ } 
\end{array}\right.\hspace{-1em}.
\end{align}
 Based on the above options for  $y_{jk} $,  the conditional expectation, $\mathcal{E}\{Y_k|E^k_{jm'}\} $, can be obtained as follows:
 \begin{align}
\mathcal{E}\{Y_k|E^k_{jm'}\} =& \underset{x_{jk}\in\mathbb{Z}}{\sum}x_{jk}MT  p^k_{j3}    +\underset{x_{jk}\in\mathbb{Z}}{\sum} \left(x_{jk}MT-\frac{M}{2}T\right)  p^k_{j4} \\\nonumber
 =& \sum^{\infty}_{j=1} jMT p_{0k}^{j-1}p_{m'k} + \sum^{\infty}_{j=1} \left(jMT-\frac{M}{2}T\right)p_{0k}^{j-1}p_{mk}  
 \\\nonumber 
 =&   \frac{MT (p_{mk} +p_{m'k} ) }{(1- p_{0k})^2}- \frac{1}{2} \frac{MTp_{mk}}{1- p_{0k}},
\end{align}
where    $p_{j3}^k\triangleq \mathbb{P}(y_{jk}=x_{jk}MT|E^k_{jm'})=p_{0k}^{j-1}p_{m'k}$,    and   $p^k_{j4}\triangleq \mathbb{P}(y_{jk}=x_{jk}MT-\frac{M}{2}T|E^k_{jm'})=p_{0k}^{j-1}p_{mk}$.

By using the two conditional expectations, $\Delta_{k,0}  $ can be obtained as follows:
\begin{align}\nonumber
\Delta_{k,0}  
\overset{(1)}{=}&
 \frac{mT\mathbb{P}(E^k_{jm})\mathcal{E}\{Y_k|E^k_{jm}\}   + m'T\mathbb{P}(E^k_{jm'})\mathcal{E}\{Y_k|E^k_{jm'}\}  }{\mathcal{E}\{Y_k\}} 
 \\\nonumber
 \overset{(2)}{=}&
 \frac{mTp_{mk}\mathcal{E}\{Y_k|E^k_{jm}\}   + m'Tp_{m'k}\mathcal{E}\{Y_k|E^k_{jm'}\}  }{\mathcal{E}\{Y_k\}} 
 \\ 
 \overset{(3)}{=}&\frac{1}{(p_{mk}+p_{m'k})^2  \frac{1}{(1-p_{0k})^2}}  \left[ \left(  (p_{mk} +p_{m'k} )  \frac{mTp_{mk}}{(1- p_{0k})^2} + \frac{p_{m'k}}{2} \frac{mTp_{mk}}{1- p_{0k}}\right)\right. \\\nonumber &+ \left.  \left( (p_{mk} +p_{m'k} )  \frac{m'Tp_{m'k}}{(1- p_{0k})^2} - \frac{p_{mk}}{2} \frac{m'Tp_{m'k}}{1- p_{0k}}\right)  \right],
\end{align}
where the second step follows from the fact that $\mathbb{P}(E^k_{jm})=p_{mk}$, $\mathbb{P}(E^k_{jm'})=p_{m'k}$, and the last step follows from the fact that $\mathcal{E}\{Y_k\}=MT(p_{mk}+p_{m'k})^2  \frac{1}{(1-p_{0k})^2} $. As can be seen from the above expression, the first term of the users'   AoI  expression in \eqref{delta request},  $\bar{\Delta}_k^N$, can be explicitly written  as a function of $p_{0k}$, $p_{mk}$, and $p_{m'k}$, which will be evaluated in the following two subsections for the two users, respectively.

\subsection{Evaluation of $p_{0m}$, $p_{mm}$, and $p_{m'm}$}
Recall that  $p_{0m}$ is the probability of the event that ${\rm U}_m$ fails to deliver an update in both of the two  time slots of  a given  frame.  Note that in each of the two times slots, ${\rm U}_m$ is allowed to transmit in different roles.   Further note that if the update from ${\rm U}_{m'}$ in the $m$-th time slot is successful, ${\rm U}_{m'}$ will remain silent in the $m'$-th time slot, which means that ${\rm U}_{m}$ solely occupies this time slot. By using the date rate constraint in \eqref{crnoma}, the probability, $p_0$, can  be expressed as follows: 
{\small \begin{align}\nonumber
p_{0m} =&   \mathbb{P}\left( \log\left( 1+ P|h_m^{i,m}|^2 \right)\leq \frac{N}{T},  \log\left( 1+\frac{P^S|h_{m'}^{i,m}|^2}{P|h_{m}^{i,m}|^2+1}\right)\leq \frac{N}{T}, \log\left( 1+\frac{P^S|h_m^{i,m'}|^2}{P|h_{m'}^{i,m'}|^2+1}\right)\leq \frac{N}{T} \right) \\\label{p0}
&+
 \mathbb{P}\left( \log\left( 1+ P|h_m^{i,m}|^2 \right)\leq \frac{N}{T},  \log\left( 1+\frac{P^S|h_{m'}^{i,m}|^2}{P|h_{m}^{i,m}|^2+1}\right)\geq \frac{N}{T},  \log\left( 1+ P^S|h_m^{i,m'}|^2 \right)\leq \frac{N}{T} \right) ,
\end{align}  }
$\hspace{-0.5em}$where   the $i$-th frame is used for illustration. 

Because the users' channel gains in different time slots are assumed to be independent, the event $E_1\triangleq \left\{\log\left( 1+ P|h_m^{i,m}|^2 \right)\leq \frac{N}{T},  \log\left( 1+\frac{P^S|h_{m'}^{i,m}|^2}{P|h_{m}^{i,m}|^2+1}\right)\leq \frac{N}{T}\right\}$ is independent from the event $\left\{\log\left( 1+\frac{P^S|h_m^{i,m'}|^2}{P|h_{m'}^{i,m'}|^2+1}\right)\leq \frac{N}{T}\right\}$. Therefore, the probability of $E_1$ can be calculated separately  as follows: 
\begin{align}\label{E1}
  \mathbb{P}\left( E_1 \right)  
=  & \mathbb{P}\left(  P|h_m^{i,m}|^2 \leq \epsilon,   \frac{P^S|h_{m'}^{i,m}|^2}{P|h_{m}^{i,m}|^2+1} \leq \epsilon  \right)  
\\\nonumber
=  & \int^{\frac{\epsilon}{P}}_{0} \left(1-e^{- \frac{\epsilon}{P^S}(Px+1)}\right)e^{-x}dx
=  1-e^{-\frac{\epsilon}{P}}-  e^{- \frac{\epsilon}{P^S}}\frac{1-e^{-\left(\frac{\epsilon}{P^S}P+1\right)\frac{\epsilon}{P} }}{\frac{\epsilon}{P^S}P+1}.
\end{align}
Similarly define $E_2\triangleq \left\{ \log\left( 1+ P|h_m^{i,m}|^2 \right)\leq \frac{N}{T},  \log\left( 1+\frac{P^S|h_{m'}^{i,m}|^2}{P|h_{m}^{i,m}|^2+1}\right)\geq \frac{N}{T}\right\}$. The probability of $E_2$ can be evaluated  as follows:
\begin{align}\label{E2}
  \mathbb{P}\left(E_2 \right)  
=  & \mathbb{P}\left(   P|h_m^{i,m}|^2 \leq \epsilon,   \frac{P^S|h_{m'}^{i,m}|^2}{P|h_{m}^{i,m}|^2+1} \geq \epsilon  \right)    
\\\nonumber
=  &   e^{- \frac{\epsilon}{P^S} }\frac{1-e^{-(\frac{\epsilon}{P^S}P +1)\frac{\epsilon}{P} }}{\frac{\epsilon}{P^S}P +1}.
\end{align}

By substituting \eqref{E1} and \eqref{E2} into \eqref{p0} and with some algebraic manipulations,    probability $p_{0m}$ can be expressed as follows:
\begin{align}\nonumber
p_{0m}
=& \left(1-e^{-\frac{\epsilon}{P}}-  e^{- \frac{\epsilon}{P^S}}\frac{1-e^{-\left(\frac{\epsilon}{P^S}P+1\right)\frac{\epsilon}{P} }}{\frac{\epsilon}{P^S}P+1}\right)  \left(1- \frac{e^{-\frac{\epsilon}{P^S} }}{1+\frac{P\epsilon}{P^S}}\right)
 + e^{- \frac{\epsilon}{P^S} }\frac{1-e^{-(\frac{\epsilon}{P^S}P +1)\frac{\epsilon}{P} }}{\frac{\epsilon}{P^S}P +1}\left(1-e^{- \frac{\epsilon}{P^S}}\right)    .
\end{align}

 Recall that $p_{m'm}$ is the probability for the event that the user fails to    deliver its update in the $m$-th time slot   but successfully delivers the update in the $m'$-th time slot. This probability can be expressed as follows:
 \begin{align}\nonumber
p_{m'm} =  &  \mathbb{P}\left( \log\left( 1+ P|h_m^{i,m}|^2 \right)\leq \frac{N}{T},  \log\left( 1+\frac{P^S|h_{m'}^{i,m}|^2}{P|h_{m}^{i,m}|^2+1}\right)\leq \frac{N}{T}, \log\left( 1+\frac{P^S|h_m^{i,m'}|^2}{P|h_{m'}^{i,m'}|^2+1}\right)\geq \frac{N}{T} \right) \\\nonumber
&+ \mathbb{P}\left( \log\left( 1+ P|h_m^{i,m}|^2 \right)\leq \frac{N}{T},  \log\left( 1+\frac{P^S|h_{m'}^{i,m}|^2}{P|h_{m}^{i,m}|^2+1}\right)\geq \frac{N}{T}, \log\left( 1+ P^S|h_m^{i,m'}|^2 \right)\geq \frac{N}{T} \right)  \\\nonumber
=& \left(1-e^{-\frac{\epsilon}{P}}-  e^{- \frac{\epsilon}{P^S}}\frac{1-e^{-\left(\frac{\epsilon}{P^S}P+1\right)\frac{\epsilon}{P} }}{\frac{\epsilon}{P^S}P+1}\right)   e^{-\frac{\epsilon}{P^S} }\frac{1}{1+\frac{P\epsilon}{P^S}} 
+ e^{- \frac{2\epsilon}{P^S} }\frac{1-e^{-(\frac{\epsilon}{P^S}P +1)\frac{\epsilon}{P} }}{\frac{\epsilon}{P^S}P +1}  ,
\end{align}
  where the last step is obtained by following   steps similar to those for evaluating  $p_{0m}$.   It is straightforward to show that  $p_{mm}=p_m$.

By substituting  the expressions of $p_{0m}$, $p_{mm}$, $p_{m'm}$ and $\Delta_{m,0}$ in \eqref{delta request}, a closed-form expression for  ${\rm U}_{m}$'s AoI can be obtained as shown in the lemma.

\subsection{Evaluation of  $p_{0m'}$, $p_{mm'}$ and $p_{m'm'}$}
Recall that $ {p}_{0m'}$ is the probability of the event that ${\rm U}_{m'}$ fails to deliver an update in  one frame. Again,  we take the $i$-th frame as an example. Note that  ${\rm U}_{m'}$ is the secondary user in the $m$-th time slot and the primary user in the $m'$-th time slot, which means that $p_{0m'} $ can be expressed as follows: 
\begin{align}
p_{0m'} =  & \mathbb{P}\left(  \log\left( 1+\frac{P^S|h_{m'}^{i,m}|^2}{P|h_{m}^{i,m}|^2+1}\right)\leq \frac{N}{T},  \log\left( 1+ P|h_{m'}^{i,m'}|^2 \right)\leq \frac{N}{T}\right) .
\end{align}
It is interesting to observe that the expression for $p_{0m'} $ is simpler than that for $p_{0m}$   in \eqref{p0} because ${\rm U}_{m'}$ is the primary user in the $m'$-th time slot and its data rate in this time slot is always $\log\left( 1+ P|h_{m'}^{i,m'}|^2 \right)$, regardless of  ${\rm U}_m$'s transmission strategy in the $m'$-th time slot.   

By using the assumption that the users' channels are i.i.d. Rayleigh faded, $p_{0m'}$ can be evaluated as follows:
\begin{align}
p_{0m'} =  & \mathbb{P}\left(   \frac{P^S|h_{m'}^{i,m}|^2}{P|h_{m}^{i,m}|^2+1} \leq \epsilon\right)\mathbb{P}\left(   |h_{m'}^{i,m'}|^2  \leq \frac{\epsilon}{P}\right) \\\nonumber 
=  & \left(1-e^{-\frac{\epsilon}{P^S}}\frac{1}{1+\frac{\epsilon P}{P^S}}\right)(1-e^{-\frac{\epsilon}{P}}). 
\end{align} 

Recall that $p_{mm'}$ denotes the probability of the event that ${\rm U}_{m'}$   successfully delivers its update in the $m$-th time slot of a frame, which  can be  expressed as follows:
\begin{align}
p_{mm'} =  & \mathbb{P}\left(   \log\left( 1+\frac{P^S|h_{m'}^{i,m}|^2}{P|h_{m}^{i,m}|^2+1}\right) \geq \frac{N}{T}  \right)  \\\nonumber &=e^{-\frac{\epsilon}{P^S}}\frac{1}{1+\frac{\epsilon P}{P^S}}. 
\end{align}
Recall that $p_{m'm'}$ denotes  the probability of the event that ${\rm U}_{m'}$ fails   in the $m$-th time slot  but successfully delivers the update in the $m'$-th time slot. This probability can be expressed as follows:
\begin{align}
p_{m'm'} =  & \mathbb{P}\left(  \log\left( 1+\frac{P^S|h_{m'}^{i,m}|^2}{P|h_{m}^{i,m}|^2+1}\right)\leq \frac{N}{T},  \log\left( 1+ P|h_{m'}^{i,m'}|^2 \right)\geq \frac{N}{T}\right) 
\\\nonumber 
=&\left(1-e^{-\frac{\epsilon}{P^S}}\frac{1}{1+\frac{\epsilon P}{P^S}}\right)e^{-\frac{\epsilon}{P}}.
\end{align}

By using the expressions for $p_{0m'}$, $p_{mm'}$, and $p_{m'm'}$, a closed-form expression for  ${\rm U}_{m'}$'s AoI can be obtained as shown in the lemma.  This completes the proof. 
\bibliographystyle{IEEEtran}
\bibliography{IEEEfull,trasfer}
  \end{document}